\documentclass[11pt]{article}
\usepackage[letterpaper,margin=1.00in]{geometry}
\usepackage{graphicx} % Required for inserting images
\usepackage{amsthm}
\usepackage{amsmath}
\newtheorem{theorem}{Theorem}[section]

\newtheorem{lemma}[theorem]{Lemma}

\usepackage{kotex}
\usepackage{amsfonts}
\usepackage{amssymb}
\usepackage{amsmath}
\usepackage{graphicx} % Required for inserting images

% \usepackage[letterpaper,margin=1.00in]{geometry}
% \usepackage{amsmath, amssymb, amsthm, amsfonts}
% \usepackage{bbm}

% \usepackage{biblatex} %Imports biblatex package
% \addbibresource{library.bib} %Import the bibliography file

\usepackage{ifthen}
% \usepackage{arrayjobx}
% \usepackage{tikz}
% \usetikzlibrary{positioning,decorations.pathreplacing}

% \usepackage{cite}
%\usepackage{url}
\usepackage{appendix}
\usepackage{graphicx}
\usepackage{color}
\usepackage{algorithm}
\usepackage[noend]{algpseudocode}
\usepackage{wrapfig}
\usepackage{balance}

\usepackage{framed}
\usepackage[framemethod=tikz]{mdframed}
\usepackage[bottom]{footmisc}
\usepackage{enumitem}
% \setitemize{noitemsep,topsep=3pt,parsep=3pt,partopsep=3pt}
%\usepackage{multirow}
% \usepackage[font=small]{caption}
%\captionsetup{font=small}
% \usepackage{xspace}

\let\oldtextbf=\textbf
\renewcommand\textbf[1]{{\boldmath\oldtextbf{#1}}}

\definecolor{darkgreen}{rgb}{0,0.5,0}
\usepackage{hyperref}
\usepackage{thmtools}
\usepackage{thm-restate}

\hypersetup{
    unicode=false,          % non-Latin characters in Acrobat’s bookmarks
    colorlinks=true,        % false: boxed links; true: colored links
    linkcolor=red,          % color of internal links (change box color with linkbordercolor)
    citecolor=darkgreen,        % color of links to bibliography
    filecolor=magenta,      % color of file links
    urlcolor=cyan           % color of external links
}
\usepackage[capitalize, nameinlink]{cleveref}

\algnewcommand\algorithmicswitch{\textbf{switch}}
\algnewcommand\algorithmiccase{\textbf{case}}

% New "environments"
\algdef{SE}[SWITCH]{Switch}{EndSwitch}[1]{\algorithmicswitch\ #1\ \algorithmicdo}{\algorithmicend\ \algorithmicswitch}%
\algdef{SE}[CASE]{Case}{EndCase}[1]{\algorithmiccase\ #1}{\algorithmicend\ \algorithmiccase}%
\algtext*{EndSwitch}%
\algtext*{EndCase}%
%----------------------------------------------------------------------

\renewcommand{\paragraph}[1]{\vspace{0.15cm}\noindent {\bf #1}.}

%--------------------------- Full Or Short --------------------------------------------- ---------------------------------------------------------------------------------------- ----------------------------------------------------------------------------------------

 %This is for printing purposes while writing/editing

\newcommand{\FullOrShort}{full}

\ifthenelse{\equal{\FullOrShort}{full}}{
    
  \newcommand{\fullOnly}[1]{#1}
  \newcommand{\shortOnly}[1]{}
  }{
    \newcommand{\fullOnly}[1]{}
    \newcommand{\shortOnly}[1]{#1}
  }

%-----------------------------------------------------------------------------------

\title{An Optimal MPC Algorithm for Subunit-Monge Matrix Multiplication, with Applications to LIS}

\begin{document}
\date{}
%\author{anonymous author(s)}
 \author{Jaehyun Koo \\ \small MIT \\ \small koosaga@mit.edu}
\maketitle

\begin{abstract} 
We present an $O(1)$-round fully-scalable deterministic massively parallel algorithm for computing the min-plus matrix multiplication of unit-Monge matrices. We use this to derive a $O(\log n)$-round fully-scalable massively parallel algorithm for solving the exact longest increasing subsequence (LIS) problem. For a fully-scalable MPC regime, this result substantially improves the previously known algorithm of $O(\log^4 n)$-round complexity, and matches the best algorithm for computing the $(1+\epsilon)$-approximation of LIS.
\end{abstract}

\maketitle
%-------------------------------------------------------------------------------
% \thispagestyle{empty}
% \newpage
% \listoftodos
% \thispagestyle{empty}
% \newpage
% {   \vspace{2cm}    
%     \bigskip\bigskip\bigskip
%     \hypersetup{linkcolor=blue}
%     \tableofcontents
%     \thispagestyle{empty}
% }
% \clearpage
%-------------------------------------------------------------------------------
% \setcounter{page}{1}

\section{Introduction}
The \textit{longest increasing subsequence} (LIS) problem is a fundamental problem in computer science. A celebrated $O(n \log n)$ sequential algorithm of Fredman \cite{FREDMAN197529} has been known for decades, and the result is tight in the comparison model \cite{Ramanan1997TightL}. On the other hand, in the MPC literature, our understanding of the LIS problem is somewhat limited. Although we've recently seen a lot of advances in this area, all known algorithms still require either approximations, scalability restrictions, a large number of rounds, or combinations of them.

One possible approach for this problem is to use the concept of unit-Monge matrices introduced by Tiskin \cite{tiskin2013semilocal}. Tiskin's framework has a highly desirable property of decomposability, where the LIS problem is decomposed into $O(n)$ multiplication of the unit-Monge matrix, which can be processed in $O(\log n)$ rounds in the MPC model. For the problem of LIS, Tiskin's framework was used to obtain new algorithms in various computing models \cite{tiskin2013semilocal, kt09, kt10, chs23, dynamiclis2021a}, hinting its possible application for the MPC model as well. We note that Tiskin's framework was also used in new algorithms for the longest common subsequence (LCS) problem \cite{tiskin2013semilocal, RUSSO201230, TISKIN2008570}, another fundamental problem closely connected to the LIS problem. 

In this work, we present an $O(1)$-round fully-scalable MPC algorithm for the multiplication of unit-Monge matrix. This result implies an $O(\log n)$ round algorithm to the exact LIS problem in the MPC model, which substantially improves the previously known algorithm of $O(\log^4 n)$ round complexity \cite{chs23} and matching the best approximation algorithm in the fully-scalable MPC regime \cite{ims17}. We hope this result could illustrate the power of the unit-Monge framework and be a bridge for closing the gap on the LIS problem in the MPC model.

\subsection{The MPC Model}
The \textit{Massively Parallel Computation} (MPC) model was first introduced in \cite{10.5555/1873601.1873677}, and later refined in \cite{10.1145/3125644, 10.1145/2591796.2591805,10.1007/978-3-642-25591-5_39}. Our description of the model closely follows that of \cite{10.1145/2591796.2591805}.

In the MPC model, we have $m$ machines with space $s$, where $n$ is the size of the input, $m = O(n^\delta)$ and $s = \tilde{O}(n^{1 - \delta})$ for a parameter $\delta$ (here $\tilde{O}$ hides the poly-log factor). The parameter $\delta$ is generally assumed to be any constant such that $0 < \delta < 1$. If an MPC algorithm works for any such $\delta$, it is called \textit{fully scalable}. All our results here will be fully-scalable.

In the beginning, the input data is distributed across the machines. Then, the computation proceeds in rounds. In each round, a machine performs local computation on its data (of size $s$) and then sends messages to other machines for the next round. The total amount of communication sent or received by a machine is bounded by $s$, its space. In the next round, each machine treats the received messages as the input for the round. 

In the MPC model, the primary complexity measure is the number of rounds $R$ required to solve a problem, corresponding to a \textit{span} in the parallel algorithm literature.

\subsection{Previous works}
\paragraph{Subunit-Monge Matrix Multiplication}  
The seminal work of Tiskin \cite{tiskin2013semilocal} revealed an intriguing structural fact about the Subunit-Monge matrix, its application to the LIS and LCS problem, and an efficient centralized algorithm. Following this work, Krusche and Tiskin \cite{kt09} designed a parallel algorithm that achieves an $O(1)$ superstep subunit-Monge matrix multiplication in a BSP model introduced by Valiant \cite{bsp}. For a BSP algorithm to be applicable in the standard MPC model, it should achieve a \textit{fully-scalable communication and memory} as defined in \cite{kt10} (one should not confuse this with the fully-scalable MPC algorithm). Formally, a BSP algorithm with $p$ machine should achieve $\tilde{O}(\frac{n}{p})$ communication cost and memory cost. Unfortunately, their algorithm requires $\Omega(\frac{n}{\sqrt p})$ memory and communication for $p$ machine, which makes it not fully scalable and is not applicable for an MPC model.

The following work from Krusche and Tiskin \cite{kt10} achieves an $O(\log n)$ superstep BSP algorithm with fully scalable communication and memory. While this work does translate to an MPC algorithm with $O(\log n)$ rounds, their algorithm has a restriction of $p < n^{1/3}$ as their communication and memory cost is $\tilde{O}(\frac{n}{p} + p^2)$ to be precise. Hence, their algorithm is applicable for an MPC model only in a range of $0 < \delta < \frac{1}{3}$, and also does not achieve a $O(1)$ round complexity.

Recently, Cao, Huang, and Su \cite{chs23} obtained an algorithm in an EREW PRAM model with $O(\log^3 n)$ span. This result also implies an MPC algorithm for $O(\log^3 n)$ round complexity. While this is the best algorithm that satisfies the fully scalable property, we note that the round complexity is much larger than the optimal algorithm.

\paragraph{Longest Increasing Subsequence}
Tiskin's work \cite{tiskin2013semilocal} reveals the connection between the subunit-Monge Matrix Multiplication and the Longest Increasing Subsequence (LIS) problem. They show an equivalence relation between the length of LIS and the multiple of $n$ subunit-Monge matrices of total size $O(n)$. Due to this, in most parallel models, an $S(n)$ round algorithm for subunit-Monge matrix multiplication implies an $O(S(n) \log n)$ round algorithm for the length of LIS using the standard divide-and-conquer method. Most of the recent works concerning the exact computation of LIS in the parallel model use this framework - they start by obtaining an efficient subunit-Monge matrix multiplication algorithm and use it to obtain the length of LIS or the actual certificate with a cost of $O(\log n)$ multiplicative factor on rounds. We note that our algorithm for subunit-Monge matrix multiplication can be applied in the same way, and the result from the paper implies an $O(\log n)$ round for the exact computation of LIS length.

In the approximate regime, the paper from Im, Moseley, Sun \cite{ims17} showed an $O(1)$-round algorithm for computing the $(1+\epsilon)$-approximate LIS. Unfortunately, their approach is not fully scalable: The usage of large DP tables in the computation restricts the algorithm to only work in the range $0 < \delta \le \frac{1}{4}$. To date, the best fully-scalable algorithm for $(1+\epsilon)$-approximate LIS has $O(\log n)$ round complexity \cite{ims17}, which is matched by our exact LIS algorithm.

\paragraph{Longest Common Subsequence} 
Hunt and Szymanski \cite{huntszymanski} introduced a simple reduction from the LCS problem to the LIS problem: For two string $S = \{s_1, s_2, \ldots, s_n\}, T = \{t_1, t_2, \ldots, t_n\}$, list all \textit{matching pairs} of $(i, j)$ with $s_i = t_j$ in lexicographical order. In this sequence, a  subsequence of pairs where $j$ is strictly increasing corresponds to each common subsequence of $S$ and $T$. As long as one can hold all matching pairs across the machines, the efficient solution to the LIS problem implies an efficient solution to the LCS problem. For example, \cite{chs23} implies an $O(\log^4 n)$-span parallel algorithm for the LCS problem, given that one can afford $\tilde{O}(n^2)$ total work and the space. 

For $\tilde{O}(n^2)$ total space, Apostolico, Atallah, Larmore, and McFaddin provided an $O(\log^2 n)$-span CREW PRAM parallel algorithm with $O(n^2)$ space \cite{apostolico}, which also translates to an $O(\log^2 n)$-round MPC algorithm. On the other hand, the problem gets significantly more difficult with more stringent space restrictions. No exact algorithm with $o(n)$-round is known for subquadratic total space. Moreover, no constant-factor approximation algorithm with $o(n)$-round is known for the general MPC model with near-linear total space. We note that the framework of \cite{ims17} also does not apply to the LCS problem as it has no decomposability and monotonicity properties. Recently, there was progress on the MPC algorithms in a model that allows constant approximation and super-linear total space. \cite{mpclcs1, mpclcs2, mpclcs3} 

\begin{table*}
\centering\small
\renewcommand{\arraystretch}{1.1}
\begin{tabular}{|l|c|c|c|}
  \hline
   \bf Reference &Number of Rounds & Scalability & Approximation\\
   \hline
  \bf \cite{kt10} & $O(\log^2 n)$ & $\delta < \frac{1}{3}$ & Exact\\
  \hline
  \bf \cite{ims17} & $O(\log n)$ & Fully-scalable & $(1+\epsilon)$\\
  \hline
  \bf \cite{ims17} & $O(1)$ & $\delta < \frac{1}{4}$ & $(1+\epsilon)$\\
  \hline
  \bf \cite{chs23} & $O(\log^4 n)$ & Fully-scalable & Exact\\
  \hline
\bf \textbf{This paper.} & $O(\log n)$ & Fully-scalable & Exact\\
  \hline
\end{tabular}
\caption{\label{tab:o1} \small
\vspace{-1em}Summary of known results for massively parallel LIS algorithms.
}
\end{table*} 

\subsection{Our contribution}
Our main result is the constant-round algorithm for the unit-Monge multiplication of two unit-Monge matrices.

\begin{restatable}[Main Theorem]{theorem}{main}
\label{thm:main}
   Given two permutation matrices $P_A$, $P_B$ represented by the indices of their nonzero entries, there is a fully-scalable deterministic MPC algorithm for computing the implicit unit-Monge matrix multiplication $P_C = P_A \boxdot P_B$ represented as the indices of their nonzero entries in $O(1)$ rounds.
\end{restatable}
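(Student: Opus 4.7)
The plan is to organize the proof as a constant-depth branching divide-and-conquer in which each level splits both permutation matrices into $k = n^{\delta'}$ pieces for some $\delta' < \delta$ and merges the resulting $k$ sub-products back together in $O(1)$ MPC rounds. Choosing $\delta'$ small enough relative to $\delta$ keeps every intermediate object within the per-machine memory budget $\tilde O(n^{1-\delta})$, and the recursion terminates after $O(1/\delta') = O(1)$ levels, yielding the claimed round bound.

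For the splitting step I would adopt a Tiskin-style pairing: cut $P_A$ into $k$ column strips and $P_B$ into $k$ row strips whose index ranges align, so that after dropping the empty rows of $P_A$'s $j$-th column strip and the empty columns of $P_B$'s $j$-th row strip, each pair $(P_A^{(j)}, P_B^{(j)})$ is a pair of implicit $(n/k)\times(n/k)$ permutation matrices. The bookkeeping needed to record the compression maps so that the sub-products can later be lifted back to the full index space is linear in the number of nonzeros. Implementing the partition in MPC is routine sorting and bucketing, which runs in $O(1)$ rounds in fully-scalable MPC using Goodrich-style sample sort. The $k$ sub-products $Q^{(j)} = P_A^{(j)} \boxdot P_B^{(j)}$ are then computed by recursive invocations of the same algorithm, run in parallel on disjoint groups of machines.

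The core of the proof is the merging subroutine. The structural claim I would aim for is a lemma stating that, under the above decomposition, each nonzero entry of $P_C = P_A \boxdot P_B$ is determined by a constant-size neighborhood in the sequence $(Q^{(1)}, \ldots, Q^{(k)})$ together with the boundary data recorded during compression, and that this determination can be carried out by a constant number of monotone prefix/suffix scans that are compatible with the subunit-Monge (``seaweed'') structure of $\boxdot$. Once such a lemma is in place, each scan reduces to standard MPC primitives (sort, prefix-sum, segmented-broadcast) each of which runs in $O(1)$ rounds in the fully-scalable regime, and the merge closes at $O(1)$ rounds per recursion level.

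The main obstacle is precisely this merging lemma: the existing PRAM algorithm of Cao--Huang--Su incurs $\Omega(\log n)$ span per level, and closing the gap to $O(1)$ MPC rounds requires a genuinely new combinatorial observation about how adjacent sub-products interact through the braid interpretation of $\boxdot$. My strategy would be to first prove the lemma for the balanced case $k = 2$, by explicitly describing how two stacked seaweed sub-products resolve into one, and then argue by induction on the block structure that the same local-interaction description lifts to arbitrary polynomial $k$ without any additional round overhead. Verifying that the induced MPC primitives indeed fit under $\tilde O(n^{1-\delta})$ memory per machine, and that the recursion does not blow up the total space beyond $\tilde O(n)$, is the remaining technical bookkeeping I would then carry out.
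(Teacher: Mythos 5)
Your decomposition step matches the paper's: cut $P_A$ into column strips and $P_B$ into aligned row strips, compress away empty rows/columns, recurse on the $(n/k)\times(n/k)$ pairs in parallel, and note that the recursion has constant depth. But the entire weight of the theorem sits in the step you defer to a ``merging lemma,'' and the route you sketch for it has a genuine gap. First, proving the lemma for $k=2$ and then ``lifting by induction on the block structure'' either means merging blocks pairwise---which reintroduces a $\Theta(\log k)$-depth merge tree and destroys the $O(1)$ bound---or it means asserting that the interaction between the $k$ sub-products is governed by adjacent pairs. The latter is exactly the trap the paper warns against: if the optimal sub-product index jumps from, say, $2$ at cell $(x,y)$ to $6$ at cell $(x,y+1)$, it does \emph{not} follow that each intermediate sub-product dominates its successor anywhere nearby, so no local, adjacent-pair description determines where each of the $k-1$ demarcation boundaries lies. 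Second, your claim that each nonzero of $P_C$ is determined by a ``constant-size neighborhood'' of the sub-products plus boundary data is only true conditionally: the paper's Lemmas on characterizing $P_C$ (via the table $opt(i,j)$ of best sub-product indices) are local \emph{given} $opt$, but $opt$ itself is a global minimum over $k$ candidates and computing it is the hard part; it does not reduce to a constant number of prefix/suffix scans.

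What the paper actually does, and what is missing from your proposal, is: (i) prove that the pairwise difference functions $\delta_{q,r}(i,j)=F_q(i,j)-F_r(i,j)$ increase by $0$ or $1$ in each coordinate, hence $opt(i,j)$ is monotone and the $H-1$ demarcation lines are monotone chains; (ii) compute, for \emph{all} pairs $(q,r)$ (not just adjacent ones), the first crossing $cmp(jG,q,r)$ of $\delta_{q,r}$ along grid lines spaced $G=n^{1-\delta}$ apart, using a constant-height $H$-ary tree whose nodes store color-separated sorted arrays, with a parallel descent implemented by offline rank searches in $O(1)$ rounds; and (iii) fill in $opt$ inside the $O(nH/G)$ relevant $G\times G$ subgrids locally, with a charging argument (each point enters at most four subgrid instances) that keeps total space near-linear. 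Steps (ii) and (iii), including the space accounting you label ``remaining bookkeeping,'' are the substance of the result; without them the proposal is a restatement of the problem rather than a proof.
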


Note that the (sub-)permutation matrices are the equivalent representation of (sub)unit-Monge matrices. The exact connection between these two objects will be made clear in the Preliminaries section.

Following \cref{thm:main}, we show how to generalize this result using the techniques from \cite{tiskin2013semilocal}. The only difference from the \cref{thm:main} is that it allows subunit-Monge matrices.

\begin{restatable}{theorem}{mainExt}
\label{thm:main_ext}
   Given two sub-permutation matrices $P_A$, $P_B$ represented by the indices of their nonzero entries, there is a fully-scalable deterministic MPC algorithm for computing the implicit subunit-Monge matrix multiplication $P_C = P_A \boxdot P_B$ represented as the indices of their nonzero entries in $O(1)$ rounds.
\end{restatable}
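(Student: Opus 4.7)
The plan is to reduce \cref{thm:main_ext} to \cref{thm:main} by padding the sub-permutation matrices to full permutation matrices, applying the main theorem, and then extracting the relevant portion of the result. This is precisely the classical reduction from subunit-Monge to unit-Monge multiplication developed in \cite{tiskin2013semilocal}; my task is to verify that every step fits into $O(1)$ fully-scalable MPC rounds.

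First, I would construct the padded matrix $\tilde P_A$. Let $R_A$ be the empty rows and $C_A$ the empty columns of $P_A$; since $P_A$ is a sub-permutation, $|R_A| = |C_A| = k_A$ for some $k_A \le n$. Enlarging the index set to $\{0, 1, \dots, n + k_A - 1\}$, I would keep every original nonzero of $P_A$ and additionally insert, for each $r \in R_A$, a $1$ at position $(r, n + \mathrm{rank}_{R_A}(r))$, and for each $c \in C_A$, a $1$ at position $(n + \mathrm{rank}_{C_A}(c), c)$. A direct check shows that $\tilde P_A$ has exactly one $1$ in each row and column, so it is a full permutation matrix of size $(n + k_A) \times (n + k_A)$. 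Applying the analogous construction to $P_B$ and then bringing both extended matrices to a common dimension $N = O(n)$ via identity-style extensions yields two full permutation matrices suitable as input to \cref{thm:main}.

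Second, I would invoke \cref{thm:main} on $\tilde P_A$ and $\tilde P_B$ to obtain $\tilde P_C = \tilde P_A \boxdot \tilde P_B$ in $O(1)$ rounds, and then extract $P_C$ as the restriction of $\tilde P_C$ to the coordinate window $[0, n) \times [0, n)$. Each nonzero of $\tilde P_C$ can independently decide whether it lies in this window, so the extraction is a trivial filtering step. The padding construction reduces to sorting $R_A, C_A, R_B, C_B$ and computing ranks, both of which are standard fully-scalable $O(1)$-round primitives, so the overall round complexity remains $O(1)$ and the algorithm remains fully-scalable.

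The main obstacle is the correctness of the reduction, namely showing that the restriction of $\tilde P_C$ to $[0, n) \times [0, n)$ really equals the subunit-Monge product $P_A \boxdot P_B$. This hinges on the fact that the padded entries act as \emph{identity} entries in the min-plus semantics of the distribution matrices: a path that enters one of the padded rows or columns of $\tilde P_A$ or $\tilde P_B$ must exit through the matched padded row or column, contributing nothing to the cost in the original $n \times n$ window. The argument follows the structural analysis of subunit-Monge matrices in \cite{tiskin2013semilocal}, and amounts to checking that the chosen rank-based matching of empty rows to new columns (and empty columns to new rows) preserves the unit-Monge property and does not create any spurious shortcut in the min-plus product. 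Once this correctness claim is verified, the MPC implementation and round analysis fall out immediately from \cref{thm:main}.
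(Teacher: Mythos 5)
Your high-level plan (pad the sub-permutation matrices to full permutation matrices, apply \cref{thm:main}, restrict back) is exactly the paper's strategy, but the specific padding you propose is not Tiskin's reduction, and the correctness claim you defer to \cite{tiskin2013semilocal} is false for it. There are two structural problems. First, you keep the empty rows of $P_A$ (and empty columns of $P_B$) and ``complete'' them with $1$s placed in freshly created \emph{inner} indices; in the product $\tilde P_A \boxdot \tilde P_B$ those fresh columns of $\tilde P_A$ are identified with the fresh rows of $\tilde P_B$ created for the empty columns of $P_B$, so the two paddings interact through the inner dimension and can create nonzeros inside the window. Second, you place the $1$s completing the empty columns of $P_A$ in new rows appended at the \emph{bottom}; since $P^\Sigma(i,j)$ counts nonzeros with row index $>i$ and column index $<j$, every such added point is counted for all $i\le n$, so inside the window $\tilde P_A^\Sigma(i,j)=P_A^\Sigma(i,j)+\#\{\text{empty columns of } P_A \text{ with index} <j\}$. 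This distortion depends on the minimization variable $j$, so it can change the minimizers in $\min_j(\tilde P_A^\Sigma(i,j)+\tilde P_B^\Sigma(j,k))$; it is not a harmless offset that cancels in the second differences (an analogous distortion affects $\tilde P_B$).

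A concrete counterexample: take $n=2$ and $P_A=P_B$ with the single nonzero $P_A(\hat 0,\hat 1)=1$. Then $P_A^\Sigma(i,1)=P_B^\Sigma(1,k)=0$ for all $i,k$, so $P_C^\Sigma\equiv 0$ and $P_C$ is the zero matrix. Your construction yields $\tilde P_A=\tilde P_B$ equal to the $3\times 3$ permutation matrix with nonzeros at $(\hat 0,\hat 1),(\hat 1,\hat 2),(\hat 2,\hat 0)$, and a direct computation of $\tilde P_C=\tilde P_A\boxdot\tilde P_B$ gives nonzeros at $(\hat 0,\hat 2),(\hat 1,\hat 1),(\hat 2,\hat 0)$; restricting to $[0,n)\times[0,n)$ leaves the spurious nonzero at $(\hat 1,\hat 1)$. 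The paper avoids both issues: it first \emph{deletes} the empty rows of $P_A$ and empty columns of $P_B$ (the corresponding rows/columns of $P_C$ are zero and are restored at the end), leaving $P_A$ of size $n_1\times n_2$ and $P_B$ of size $n_2\times n_3$ with $n_1,n_3\le n_2$, and then pads only the outer dimensions: the $n_2-n_1$ new rows are added \emph{above} $P_A$ (their $1$s filling the empty columns of $P_A$) and the $n_2-n_3$ new columns are added to the \emph{right} of $P_B$. With this placement no padded nonzero is ever counted by $\tilde P_A^\Sigma(i,\cdot)$ for $i$ in the (shifted) original row range, nor by $\tilde P_B^\Sigma(\cdot,k)$ for $k\le n_3$, so the lower-left $n_1\times n_3$ block of the product is exactly $P_C$, and only then is the content of the padding genuinely irrelevant. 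Your MPC accounting (sorting, ranks, prefix sums in $O(1)$ fully-scalable rounds) is fine, but it must be applied to this padding, not the one you describe.
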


The result of \cref{thm:main_ext} has various implications as many problems can be represented as a min-plus multiplication of several subunit-Monge matrices. The first and the most notable result is the following:

\begin{restatable}{theorem}{mainLis}
\label{thm:main_lis}
   Given a sequence of $n$ numbers $A = \{a_1, a_2, \ldots, a_n\}$, there is a fully-scalable MPC algorithm for computing the length of the LIS of $A$ in $O(\log n)$ rounds.
\end{restatable}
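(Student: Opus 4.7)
The plan is to reduce the problem to $O(\log n)$ rounds of subunit-Monge matrix multiplication via a standard divide-and-conquer on the input sequence, exactly as in Tiskin's framework. I associate to every contiguous subsequence its \emph{semi-local LIS profile}, encoded as a sub-permutation matrix of dimension equal to the subsequence length. Tiskin's key identity states that the profile of a concatenation $A = A_1 A_2$ is obtained by a single min-plus product of the profiles of $A_1$ and $A_2$, and this product is precisely the subunit-Monge multiplication to which \cref{thm:main_ext} applies.

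The algorithm splits the input $A$ of length $n$ into two halves, recursively computes their profiles, and combines them with a single invocation of \cref{thm:main_ext} at cost $O(1)$ rounds. The base case is a length-$1$ subsequence whose profile is a trivial $1 \times 1$ sub-permutation matrix. Since the recursion has depth $\lceil \log_2 n \rceil$, the total round complexity is $O(\log n)$. The LIS length is then read off from the final profile (either from a distinguished entry or by a standard one-round aggregation over the nonzero entries of the sub-permutation matrix).

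The main concern is verifying that each level of the recursion respects the fully-scalable memory constraint $s = \tilde{O}(n^{1-\delta})$. At recursion level $k$ there are $2^k$ independent subproblems, each with a profile of at most $n/2^k$ nonzero entries, so the total data at every level is $O(n)$. The $2^k$ subproblems at level $k$ are executed in parallel by distributing machines among them in proportion to the sizes of their inputs; because \cref{thm:main_ext} is itself fully scalable, every sub-invocation fits within the allotted memory and communication budget, so the global fully-scalable property is preserved.

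The main obstacle is bookkeeping rather than mathematics: at each level I must arrange the two sub-permutation matrices to be combined so that they are distributed in the indexed format that \cref{thm:main_ext} consumes, and I must schedule the $2^k$ parallel calls without overloading any machine. Since both the underlying Tiskin identity and the divide-and-conquer template are already used in prior work such as \cite{tiskin2013semilocal, chs23}, the only new ingredient is the $O(1)$-round subroutine from \cref{thm:main_ext}, which immediately replaces the $O(\log^3 n)$-span subroutine used in \cite{chs23} and yields the claimed $O(\log n)$-round bound.
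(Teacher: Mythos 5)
Your proposal matches the paper's proof in essence: the paper likewise delegates to the standard Tiskin/\cite{chs23} divide-and-conquer, where each of the $O(\log n)$ levels performs one subunit-Monge combine via \cref{thm:main_ext} in $O(1)$ rounds, plus an $O(1)$-round relabeling (sorting and inverse permutation) of the two halves — exactly the ``bookkeeping'' you flag. The only cosmetic difference is that the paper follows \cite{chs23} in splitting $A$ as $A_{lo} \circ A_{hi}$ rather than your position-based split, which is an equivalent variant of the same framework.
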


Using the observation provided by Hunt and Szymanski \cite{huntszymanski}, we can also derive the following:

\begin{restatable}{corollary}{mainLcs}
\label{cor:main_lcs}
    Given two sequences $A, B$ of $n$ numbers, there is a fully-scalable MPC algorithm for computing the length of the LCS of $A$ and $B$ in $O(\log n)$ rounds, given that we have $m = n^{1 + \delta}$ machines and $s = \tilde{O}(n^{1-\delta})$ space for any $0 < \delta < 1$.
\end{restatable}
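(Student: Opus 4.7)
The plan is to invoke the Hunt--Szymanski reduction from LCS to LIS \cite{huntszymanski} and then apply \cref{thm:main_lis} to the resulting LIS instance. Concretely, I form the set of matching pairs $M = \{(i, j) : a_i = b_j\}$, order them by $i$ ascending and by $j$ descending within each $i$-block, and let $J$ be the sequence of second coordinates in this order. The standard Hunt--Szymanski argument gives $\mathrm{LCS}(A,B) = \mathrm{LIS}(J)$: within an $i$-block the $j$'s are decreasing, so any increasing subsequence of $J$ picks at most one pair per index $i$, and such subsequences are in bijection with common subsequences of $A$ and $B$.

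Implementing the reduction in MPC is routine. I sort the tagged tuples $\{(a_i, i) : i \in [n]\}$ together with $\{(b_j, j) : j \in [n]\}$ by value using a standard $O(1)$-round MPC sort; the matching pairs for each distinct value $v$ are then the Cartesian product of the $A$- and $B$-occurrences in the corresponding group. Letting $c_A(v)$ and $c_B(v)$ denote multiplicities, we have $|M| = \sum_v c_A(v) c_B(v) \le n^2$, which fits into the total space budget $m \cdot s = \tilde{O}(n^2)$, so enumerating $M$ reduces to well-known MPC group-join primitives. A second $O(1)$-round sort arranges $M$ into the desired order and produces $J$.

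Finally, I feed $J$ (of length $N \le n^2$) into the LIS algorithm of \cref{thm:main_lis}. That algorithm is fully scalable in its input size, so for any $\delta' \in (0,1)$ it uses $O(N^{\delta'})$ machines and $\tilde{O}(N^{1-\delta'})$ space per machine, and runs in $O(\log N)$ rounds. Setting $\delta' = (1+\delta)/2$ gives at most $n^{1+\delta}$ machines and $\tilde{O}(n^{1-\delta})$ space per machine, matching the stated resources, with $O(\log N) = O(\log n)$ rounds.

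The single delicate point is aligning scalability parameters across the quadratic size blow-up: the LCS instance has size $n$ but the induced LIS instance may have size $\Theta(n^2)$. The substitution $\delta' = (1+\delta)/2 \in (1/2, 1)$ handles this, and because $\delta'$ lies strictly inside $(0,1)$, the full scalability of \cref{thm:main_lis} covers every $\delta \in (0,1)$. The only other care needed is load-balancing the Cartesian-product step when the value distribution is badly skewed, which is standard since the total space already absorbs the worst-case $n^2$ pairs.
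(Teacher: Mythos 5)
Your proposal is correct and follows essentially the same route the paper intends: apply the Hunt--Szymanski reduction to produce the at most $n^2$ matching pairs (ordered with $j$ decreasing within each $i$-block), then run the fully-scalable LIS algorithm of \cref{thm:main_lis} on the resulting sequence, absorbing the quadratic blow-up via the enlarged machine budget $m = n^{1+\delta}$ (your substitution $\delta' = (1+\delta)/2$ makes this explicit). The extra implementation details you give (the $O(1)$-round sort/join to enumerate matches and the load-balancing of skewed value groups) are standard and consistent with the paper's treatment.
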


In a case where each machine has $O(n)$ memory, the classical dynamic programming on a single machine suffices. Previously, only $O(\log^2 n)$-round algorithms were known in this setting \cite{apostolico}. Note that \cref{cor:main_lcs} do not fit in our definition of the MPC model since the total required space is $\tilde{O}(n^2)$. On the other hand, as we noted earlier, no known algorithms use near-linear total space and have $o(n)$ round complexity, even if we allow constant-factor approximation. 

Finally, the structure we used to compute the length of LIS can be used to solve the generalized version of \textit{semi-local} LIS and LCS problem \cite{tiskin2013semilocal, chs23}. In the semi-local LCS problem, given two strings $S$ and $T$, the algorithm should compute the LCS between $S$ and every subsegment of $T$, where a subsegment of $T$ is an array of form $[T_i, T_{i+1}, \ldots, T_j]$ for some $1 \le i \le j \le |T|$. In the semi-local LIS problem, the algorithm should compute the LIS for all subsegments of $T$. 

\begin{restatable}{corollary}{mainSemilocalLis}
\label{cor:main_semilocal_lis}
   Given a sequence of $n$ numbers $A = \{a_1, a_2, \ldots, a_n\}$, there is a fully-scalable MPC algorithm for computing the semi-local LIS of $A$ in $O(\log n)$ rounds.
\end{restatable}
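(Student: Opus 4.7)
The plan is to follow the same divide-and-conquer structure as in \cref{thm:main_lis}, but to preserve the full subunit-Monge matrix produced at the root of the recursion rather than collapsing it to a single scalar answer. Recall from Tiskin's framework \cite{tiskin2013semilocal} that for a sequence $A$ of length $n$, the entire semi-local LIS information is encoded in a single $(n+1) \times (n+1)$ simple subunit-Monge matrix $H_A$, equivalently represented by a sub-permutation matrix $P_A$ with $O(n)$ nonzero entries. For every pair $1 \le i \le j \le n$, the LIS of the subsegment $A[i..j]$ can be extracted from $P_A$ by a single 2D orthogonal-range dot-count, which is a standard batchable computation executable in $O(1)$ MPC rounds.

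The construction of $P_A$ proceeds recursively: split $A$ into two halves $A_L$ and $A_R$ of length $n/2$, recursively compute their semi-local encodings $P_{A_L}$ and $P_{A_R}$, and combine them by a single subunit-Monge multiplication $P_A = P_{A_L} \boxdot P_{A_R}$ after aligning the shared boundary index of the two halves. Correctness of this combining step is exactly Tiskin's composition lemma \cite{tiskin2013semilocal}, which is the same tool underlying the $O(\log n)$ divide-and-conquer used in \cref{thm:main_lis}; the only new element here is that we refrain from discarding the intermediate matrices and instead return the root matrix as the output.

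The recursion tree has depth $O(\log n)$ and $O(n)$ leaves, each of which is a trivial single-element sub-permutation. At every level, the sub-permutation matrices at that level have total size $O(n)$, and each internal node invokes the fully-scalable $O(1)$-round subunit-Monge multiplication of \cref{thm:main_ext} exactly once. Executing all nodes of a given level in parallel therefore fits within the fully-scalable MPC budget for any $0 < \delta < 1$ and completes in $O(1)$ rounds; scheduling the $O(\log n)$ levels sequentially yields the overall $O(\log n)$ round complexity. The resulting $P_A$ is stored across the machines in the standard sparse format and serves as the implicit representation of the semi-local LIS, from which any individual subsegment's answer can be read off by a rectangular count.

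The main obstacle I anticipate is purely a bookkeeping issue: when combining $P_{A_L}$ and $P_{A_R}$, one must carefully identify the boundary column of $A_L$ with the boundary row of $A_R$ and verify that $P_{A_L} \boxdot P_{A_R}$ is once again a valid sub-permutation matrix encoding the semi-local LIS of the concatenated sequence. This amounts to invoking Tiskin's combining lemma with the correct index conventions, so no structural insight beyond what is already used in \cref{thm:main_ext} and \cref{thm:main_lis} is required.
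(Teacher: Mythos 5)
Your proposal is correct and takes essentially the same route as the paper, which proves this corollary implicitly by observing that the divide-and-conquer of \cref{thm:main_lis} (following Theorem 1.2 of \cite{chs23}, with each merge being one $O(1)$-round application of \cref{thm:main_ext}) already builds the full sub-permutation matrix encoding the semi-local LIS, so one simply retains the root matrix as the implicit output instead of reading off a single length. The only cosmetic difference is that the precise merge step follows the relabeling/padding conventions of \cite{chs23} rather than a literal "boundary alignment," which is the bookkeeping you already flagged.
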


\begin{restatable}{corollary}{mainSemilocalLcs}
\label{cor:main_semilocal_lcs}
    Given two sequences $A, B$ of $n$ numbers, there is a fully-scalable MPC algorithm for computing the semi-local LCS of $A$ and $B$ in $O(\log n)$ rounds, given that we have $m = n^{1 + \delta}$ machines and $s = \tilde{O}(n^{1-\delta})$ space for any $0 < \delta < 1$.
\end{restatable}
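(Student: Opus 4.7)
The plan is to extend the Hunt--Szymanski reduction to the semi-local regime and then route everything through \cref{cor:main_semilocal_lis}. First I materialize the set of matching pairs $M = \{(i,j) \in [n]\times[n] : A_i = B_j\}$, sort them in lexicographic order by $(j, -i)$, and let $\sigma$ denote the resulting sequence of $i$-coordinates, of length $N := |M| \le n^2$. Both the enumeration of $M$ and the sort are standard $O(1)$-round MPC primitives whose total space usage is $\widetilde{O}(n^2)$; since $m \cdot s = n^{1+\delta} \cdot \widetilde{O}(n^{1-\delta}) = \widetilde{O}(n^2)$, the budget is adequate.

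Next I establish the reduction at the semi-local level. For each pair $(l,r)$ with $1 \le l \le r \le n$, the matches with $j \in [l,r]$ form a contiguous block $\sigma[b_l..b_r]$, where the endpoints $b_l, b_r$ are obtained from a prefix-sum of the per-column match counts (one MPC round). By the classical Hunt--Szymanski argument (breaking ties within a $j$-group by decreasing $i$ forbids two members of a strictly increasing subsequence of $\sigma$ from sharing the same $j$), the strictly increasing subsequences of $\sigma[b_l..b_r]$ are in bijection with common subsequences of $A$ and $B[l..r]$, giving
\[
\mathrm{LCS}(A, B[l..r]) \;=\; \mathrm{LIS}(\sigma[b_l..b_r]).
\]

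Now I invoke \cref{cor:main_semilocal_lis} on $\sigma$. With $N \le n^2$ and scalability parameter $\delta' := (1+\delta)/2 \in (0,1)$, the hypotheses of the corollary are met, since $N^{\delta'} = n^{1+\delta} = m$ and $\widetilde{O}(N^{1-\delta'}) = \widetilde{O}(n^{1-\delta}) = s$. It returns the semi-local LIS of $\sigma$ in its compressed subunit-Monge (seaweed) form of size $O(N)$, within $O(\log N) = O(\log n)$ rounds. For each of the $O(n^2)$ target pairs $(l,r)$, I then dispatch in parallel a range-style query against this compressed structure to recover $\mathrm{LIS}(\sigma[b_l..b_r])$, which costs $O(1)$ additional rounds.

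The main obstacle is respecting the $\widetilde{O}(n^2)$ total-space budget despite $\sigma$ having length $\Theta(n^2)$: a naive attempt to tabulate the LIS of every one of the $\Theta(N^2)$ subsegments of $\sigma$ would demand $\Theta(n^4)$ space. Two design choices avert this. First, the semi-local LIS is stored implicitly through the $O(N)$-entry permutation that encodes the Tiskin seaweed matrix, not as an explicit $N \times N$ table; second, only the $O(n^2)$ queries that correspond to subsegments of $B$ are materialized, which exactly fits the promised budget. Once these are in place the round count is dominated by the $O(\log n)$ invocation of \cref{cor:main_semilocal_lis}.
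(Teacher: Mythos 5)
Your proposal is correct and follows essentially the same route the paper intends for this corollary: the Hunt--Szymanski reduction from (semi-local) LCS to LIS on the at most $n^2$ matching pairs, followed by applying the semi-local LIS result (\cref{cor:main_semilocal_lis}) to the match sequence with the enlarged budget $m = n^{1+\delta}$, $s = \tilde{O}(n^{1-\delta})$. The paper leaves these details implicit, and your added specifics --- the $(j,-i)$ tie-breaking order, the contiguity of the block of matches with $j \in [l,r]$, and the parameter translation $\delta' = (1+\delta)/2$ --- are exactly the points that make the reduction go through.
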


\subsection{Technical Overview}
\textbf{Warmup: An $O(\log n)$-round algorithm for unit-Monge matrix multiplication.}
In the Section 3.1 of \cite{chs23}, they identified the following \textit{core problem}: Given two permutation $P_1, P_2$ of size $n$, we define $f(i)$ to be the smallest integer $k$ such that $\sum_{j < i}\mathbb{I}[P_1[j] < k] > \sum_{j \geq i} \mathbb{I}[P_2[j] \geq k]$. The problem is to find all $f(i)$ for $0 \le i \le n$, and \cite{chs23} provides an $O(\log^2 n)$-span divide-and-conquer algorithm to do this, which implies an $O(\log^3 n)$-span algorithm for subunit-Monge matrix multiplication. Indeed, we provide an $O(1)$-round MPC algorithm to compute $f(i)$, which implies an $O(\log n)$-round algorithm for subunit-Monge matrix multiplication.

First, let's start from an $O(\log n)$-round algorithm to compute the $f(i)$. Using binary search, we can reduce the problem into the computation of $\sum_{j < i}\mathbb{I}[P_1[j] < f(i)] - \sum_{j \geq i} \mathbb{I}[P_2[j] \geq f(i)]$ in $O(1)$ rounds.

To compute this, we construct a binary tree over $P_1$ and $P_2$, where each node contains a sorted list of all leaves in the subtree. Intuitively, this can be considered as a memoization of \textit{merge sort} procedure on each sequence, where each node takes a sorted sequence of two child nodes and merges them as in the merge sort algorithm. This tree can be constructed in $O(\log n)$ rounds as sorting takes $O(1)$ rounds (\cref{lem:tools_sort}). As our problem can be described as counting a value in a range that is smaller than the threshold, we can compute the desired quantity by identifying $O(\log n)$ relevant nodes and doing a rank search in parallel (\cref{lem:tools_search}). Hence, each binary search takes $O(1)$ rounds.

To optimize this procedure, we need to optimize the tree structure construction and the binary search. First, we \textit{flatten} the tree in a way such that each non-leaf node has $H = n^{(1 - \delta) / 10}$ child instead of two. Since the tree now has a depth of $O(1)$, the construction of the tree takes $O(1)$ rounds.

Next, rather than eliminating binary search, we adapt the search to take advantage of this new tree structure. Let's build a tree over the inverse permutation $P_1^{-1}, P_2^{-1}$ instead of the original permutation: This effectively creates a \textit{transposed} version of the tree where for each element $P_1[i]$, the index is $P_1[i]$, and the corresponding value is $i$ (and respectively for $P_2$). We will maintain two pointers, one for the tree over $P_1^{-1}$ and another for $P_2^{-1}$, and descend to one of $H$ child by counting the value less (or greater than $j$) and descending toward the leftmost child that satisfies the predicate.

This procedure ultimately yields $O(1)$ rounds for each query, but there is a catch - in each round, we are doing $O(n^{1+(1-\delta)/10})$ binary searches, which exceeds our memory capacity. Hence, we have to limit our query locations to indices that are a multiple of $G = n^{1-\delta}$. 

Given that we found an $O(1)$-round algorithm on finding $f(iG)$ for all $0 \le i \le \frac{n}{G}$, we need to use this information to compute all $f(i)$. The function $f(i)$ is monotonically decreasing, and what we've found could be visualized as an intercept of axis-parallel polyline formed by $(i, f(i))$ and $\frac{n}{G}+1$ axis-parallel lines $x = 0, G, 2G, \ldots, n$. By taking a transpose of the permutation $P_1, P_2$, we can also compute the same intercepts for lines $y = 0, G, 2G, \ldots, n$. These lines form $(\frac{n}{G})^2$ boxes of size $G \times G$ - As our polyline is monotone, there are at most $\frac{n}{G}$ boxes that our polylines intersect. Inside these boxes, we can compute the inner route of the polylines with $O(G)$-sized information, which is enough for each machine to compute in a single round.

\textbf{Shaving off the $O(\log n)$-factor.} The above algorithm essentially combines two subproblems in $O(1)$ rounds, which is the reason for its $O(\log n)$ round complexity. Our strategy is simple - we will combine $H = n^{(1-\delta)/10}$ subproblems in $O(1)$ rounds, which will solve the entire problem in $O(1)$ round complexity. This is our main technical obstacle.

The actual meaning of the \textit{core problem} in \cite{chs23} is that it computes the demarcation line between the cells that takes its optimum from the first subproblem and the second subproblem: A cell $(x, y)$ in the result matrix will take its value from the first subproblem if $y < f(x)$, and the other if $y \ge f(x)$. For $poly(n)$ subproblems, we prove a similar argument: There exist $q-1$ monotone demarcation lines, where the $i$-th line divides the matrix into the parts where it takes its value from the first $i$ subproblems or the last $q-i$ subproblems. In \cref{sec41}, we describe several definitions and useful lemmas to use this monotonicity property.

Intuitively, these demarcation lines look easy to compute since we can compare the $i$-th subproblem and the $i+1$-th subproblem. However, this approach is incorrect. Consider two cells $(x, y)$ and $(x, y+1)$, where the first cell takes its optimum from the $2$-nd subproblem and the second cell takes its optimum from the $6$-th subproblem. While it seems that for all $2 \le i \le 5$, the $i$-th subproblem should be better for $(x, y)$ compared to $i+1$-th and vice versa, it is not necessarily true, and indeed, all except one can be false. To solve this issue, we instead take the demarcating polyline for all pairs of subproblems using a procedure similar to the Warmup algorithm. Then, for each axis-parallel grid line where its $x$-coordinate (or $y$-coordinate) is a multiple of $G$, we compute the intercept solely based on the abovementioned information. We describe these procedures in detail on \cref{sec42}.

Now, the final piece is to compute the inner route of the polylines using these intercepts. In the Warmup section, a straightforward way based on the definition yields a total of $O(n)$ memory usage. However, since the number of polylines is large, directly applying this algorithm requires $O(n^{1+(1-\delta)/10})$ total memory. In \cref{sec43}, we use several clever observations to reduce the total memory requirements from $O(n^{1+(1-\delta)/10})$ to $O(n)$. 

\section{Preliminaries}

\subsection{Notations}
We follow the notations used in the prior work \cite{chs23}. For integers $i$ and $j$, we use $[i : j]$ to denote set $\{i, i + 1, \ldots, j\}$ and $[i : j)$ to denote set $\{i, i + 1, \ldots, j - 1\}$. The set $[1 : i]$ is represented short as $[i]$. We denote the half-integer set $\{i + \frac{1}{2}, i + \frac{3}{2}, \ldots, j - \frac{1}{2}\}$ by $\langle i : j\rangle$. For brevity, we may denote $[1:i]$ as $[i]$, and $\langle 0:i\rangle$ as $\langle i \rangle$. For an integer variable $i$, the corresponding half-integer variables are denoted by $\hat{i}$, and they have a value of $i + \frac{1}{2}$. Some matrices in this paper are indexed using half-integers. For a set $X$ and $Y$, we denote their Cartesian product as $X \times Y$.

Given a matrix $M$ of size $m \times n$ which is indexed by half-integers of range $\langle 0 : i\rangle \times \langle0 : j\rangle$, its distribution matrix $M^\Sigma$ is defined as

\begin{align*}
M^\Sigma(i, j) = \sum_{(\hat{i}, \hat{j}) \in \langle i : m \rangle \times \langle 0:j\rangle} M(\hat{i}, \hat{j})
\end{align*}

for all $i \in [0:m]$ and $j \in [0:n]$.

A $n \times n$ matrix $P$ is a sub-permutation matrix if and only if 
\begin{itemize}
    \item Each element in $P$ is either $0$ or $1$.
    \item There is at most one element equal to $1$ in each row and each column of $P$.
\end{itemize}

Similarly, a $n \times n$ matrix $P$ is a permutation matrix if and only if 
\begin{itemize}
    \item Each element in $P$ is either $0$ or $1$.
    \item There is exactly one element equal to $1$ in each row and each column of $P$.
\end{itemize}

In our algorithm, we represent them as an array of size $n$, where each index $i$ represents the index of the column of the nonzero element in row $\hat{i}$ when the row $\hat{i}$ is nonzero.

The distribution matrix $P^\Sigma$ of a (sub-)permutation matrix $P$ is called a (sub)unit-Monge matrix. Given two (sub-)permutation matrices $P_A, P_B$, the implicit (sub)unit-Monge matrix multiplication problem is to find a (sub-)permutation matrix $P_C$ that satisfies $P_C^\Sigma(i, k) = \min_j(P_A^\Sigma(i, j) + P_B^\Sigma(j, k))$, or a $(\text{min}, +)$-product of (sub)unit-Monge matrix $P_A^\Sigma$ and $P_B^\Sigma$. Tiskin \cite{tiskin2013semilocal} proved the following:

\begin{lemma}
    For any permutation matrix $P_A, P_B$, there exists a permutation matrix $P_C$ such that $P_C^\Sigma(i, k) = \min_j(P_A^\Sigma(i, j) + P_B^\Sigma(j, k))$.
\end{lemma}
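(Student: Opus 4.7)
The plan is to define $C(i,k) := \min_j \bigl(P_A^\Sigma(i, j) + P_B^\Sigma(j, k)\bigr)$ on $[0:n] \times [0:n]$, take $P_C$ to be the matrix whose entries are the mixed second differences
\[
P_C(\hat i,\hat k) \;:=\; C(i, k+1) - C(i+1, k+1) - C(i, k) + C(i+1, k),
\]
and verify that $P_C$ is a permutation matrix. Once this is established, telescoping together with the boundary values of $C$ shows that $P_C^\Sigma = C$, which is exactly the claim.

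First I would pin down the boundary values of $C$. For a permutation matrix $P_A$ one has $P_A^\Sigma(0,j) = j$, $P_A^\Sigma(n,j) = 0$, $P_A^\Sigma(i,0) = 0$, and $P_A^\Sigma(i,n) = n-i$, and analogously for $P_B$. Plugging these into the definition of $C$ and using the auxiliary bounds $P_B^\Sigma(j,k) \ge k-j$ and $P_A^\Sigma(i,j) \ge j-i$ (which hold because a permutation has at most one entry per row or column, so the count of entries in rows with half-integer index at most $j$ is itself at most $j$, and likewise for columns), I would conclude $C(0,k)=k$, $C(n,k)=0$, $C(i,0)=0$, and $C(i,n)=n-i$. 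A one-line telescoping then yields $\sum_{\hat k} P_C(\hat i,\hat k) = 1$ for every $i$, and $\sum_{\hat i} P_C(\hat i,\hat k) = 1$ for every $k$.

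The main technical step is showing that every entry of $P_C$ is nonnegative, equivalently that $C$ satisfies the Monge inequality $C(i,k) + C(i',k') \le C(i,k') + C(i',k)$ for all $i \le i'$ and $k \le k'$. This is the classical fact that the $(\min,+)$-product of two Monge matrices is Monge, which I would prove by taking minimizers $j_1$ attaining $C(i,k')$ and $j_2$ attaining $C(i',k)$ and splitting on whether $j_1 \le j_2$. In the first case I combine the suboptimal bounds $C(i,k) \le P_A^\Sigma(i,j_1) + P_B^\Sigma(j_1,k)$ and $C(i',k') \le P_A^\Sigma(i',j_2) + P_B^\Sigma(j_2,k')$ with the Monge inequality on $P_B^\Sigma$ applied at rows $j_1, j_2$ and columns $k, k'$; in the second case I swap the roles of the two minimizers and apply Monge to $P_A^\Sigma$ at rows $i, i'$ and columns $j_2, j_1$ instead.

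Finally, each entry of $P_C$ is an integer (being a difference of integers) and nonnegative (by the Monge inequality), while every row and column sum equals $1$, so $P_C$ must have exactly one $1$ in each row and each column and is therefore a permutation matrix. The identity $P_C^\Sigma = C$ then follows from the defining second-difference relation together with the shared boundary values. The main obstacle I anticipate is the Monge preservation step; it is not technically deep but requires a clean case split on the minimizers $j_1$ and $j_2$, whereas the remaining parts are routine bookkeeping with the half-integer indexing and the boundary identities above.
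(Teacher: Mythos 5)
Your argument is correct, but note that the paper does not prove this lemma at all: it is quoted from Tiskin \cite{tiskin2013semilocal}, so your proposal effectively supplies the missing proof rather than mirroring one. What you give is the standard route: the boundary identities $C(0,k)=k$, $C(n,k)=0$, $C(i,0)=0$, $C(i,n)=n-i$ (your auxiliary bounds $P_A^\Sigma(i,j)\ge j-i$ and $P_B^\Sigma(j,k)\ge k-j$ do hold, by counting the at most one entry per row), telescoping to get unit row and column sums of the mixed second differences, Monge-ness of $C$ to get nonnegativity, and integrality to conclude $P_C$ is a permutation matrix with $P_C^\Sigma=C$. The two-case minimizer argument for Monge closure under $(\min,+)$ is set up correctly: for $j_1\le j_2$ you need Monge of $P_B^\Sigma$ at rows $j_1,j_2$, columns $k,k'$, and for $j_2\le j_1$ you swap and use Monge of $P_A^\Sigma$ at rows $i,i'$, columns $j_2,j_1$, exactly as you state. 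The one step you leave implicit and should record is that $P_A^\Sigma$ and $P_B^\Sigma$ are themselves Monge in the sign convention you use: with the paper's convention $M^\Sigma(i,j)=\sum_{(\hat a,\hat b)\in\langle i:n\rangle\times\langle 0:j\rangle}M(\hat a,\hat b)$, one checks that
\begin{align*}
P^\Sigma(i,j+1)+P^\Sigma(i+1,j)-P^\Sigma(i,j)-P^\Sigma(i+1,j+1)=P(\hat i,\hat j)\ge 0,
\end{align*}
and summing these adjacent inequalities gives the general inequality $P^\Sigma(i,j)+P^\Sigma(i',j')\le P^\Sigma(i,j')+P^\Sigma(i',j)$ for $i\le i'$, $j\le j'$; this same identity is also what justifies your final assertion that $P_C^\Sigma=C$ once the boundary rows and columns agree. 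With that observation made explicit, the proof is complete and self-contained, which is arguably a small gain over the paper's citation-only treatment.
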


\begin{lemma}
    For any sub-permutation matrix $P_A, P_B$, there exists a sub-permutation matrix $P_C$ such that $P_C^\Sigma(i, k) = \min_j(P_A^\Sigma(i, j) + P_B^\Sigma(j, k))$
\end{lemma}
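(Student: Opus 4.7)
The plan is to reduce this to the permutation-matrix case of the preceding lemma by padding $P_A$ and $P_B$ to full permutations. Given a sub-permutation $P_A$ of size $n \times n$ with $n - k_A$ empty rows and $n - k_A$ empty columns, form $\hat{P}_A$ of size $N \times N$ (with $N = 2n$) by placing $P_A$ in the upper-left $[1:n] \times [1:n]$ block, placing a dummy $1$ at position $(i, n+i)$ for each empty row $i$, placing a dummy $1$ at position $(n+j, j)$ for each empty column $j$, and filling the remaining entries of the lower-right $[n+1:N] \times [n+1:N]$ block with an arbitrary bijection to complete the permutation structure. Perform the analogous construction on $P_B$ to produce $\hat{P}_B$.

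By the preceding lemma, there exists a permutation matrix $\hat{P}_C$ of size $N \times N$ such that $\hat{P}_C^\Sigma(i, k) = \min_j (\hat{P}_A^\Sigma(i, j) + \hat{P}_B^\Sigma(j, k))$. Define $P_C$ to be the restriction of $\hat{P}_C$ to its upper-left $n \times n$ block. Since $\hat{P}_C$ is a permutation matrix, every row and column of $P_C$ contains at most one $1$, so $P_C$ is indeed a sub-permutation matrix.

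It remains to check that $P_C^\Sigma(i, k) = \min_j(P_A^\Sigma(i, j) + P_B^\Sigma(j, k))$ for $i, k \in [0:n]$. A direct count of how the dummy ones contribute shows that $\hat{P}_A^\Sigma$, $\hat{P}_B^\Sigma$, and $\hat{P}_C^\Sigma$ each differ from $P_A^\Sigma$, $P_B^\Sigma$, $P_C^\Sigma$ respectively by an additive offset on the sub-range $[0:n] \times [0:n]$. The main obstacle is twofold: (i) showing that the minimizer over $j \in [0:N]$ of $\hat{P}_A^\Sigma(i, j) + \hat{P}_B^\Sigma(j, k)$ can always be chosen in the sub-range $[0:n]$, which follows because the dummy ones placed in the right and bottom blocks of $\hat{P}_A$ and $\hat{P}_B$ make the sum no smaller for $j > n$; and (ii) arranging that the $j$-dependent parts of the offsets of $\hat{P}_A^\Sigma(i,j)$ and $\hat{P}_B^\Sigma(j,k)$ cancel inside the minimization over $j \in [0:n]$. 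This requires choosing the dummy placements in $\hat{P}_A$ and $\hat{P}_B$ in a coordinated fashion, so that the bottom-row dummies of $\hat{P}_A$ (which handle its empty columns) interlock with the right-column dummies of $\hat{P}_B$ (which handle its empty rows). Once this compatibility is ensured, the extended min-plus identity restricts cleanly to the desired identity on $P_C^\Sigma$ after subtracting a common offset depending only on $k$, which collects the dummy contributions from the rows of $\hat{P}_C$ with index $> n$ that are identical for every $i \le n$.
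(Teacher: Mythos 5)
Your high-level plan—pad $P_A,P_B$ to full permutation matrices, invoke the permutation-case lemma, and restrict—is exactly the route the paper takes (it is Tiskin's argument, and it reappears in the paper's proof of \cref{thm:main_ext}), but your specific padding destroys the cancellation you need, and the ``coordinated'' dummy placement you defer to does not exist within your scheme. Recall that $M^\Sigma(i,j)$ counts the ones of $M$ with row index \emph{greater} than $i$ and column index \emph{less} than $j$. With $P_A$ in the upper-left block and the dummy for each empty column $j'$ of $P_A$ placed at $(n+j',j')$, every such dummy lies below every $i\le n$ and to the left of every $j>j'$, so for $i,j\in[0:n]$ you get $\hat{P}_A^\Sigma(i,j)=P_A^\Sigma(i,j)+e_A(j)$, where $e_A(j)$ is the number of empty columns of $P_A$ with index less than $j$. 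This offset depends on $j$ and sits \emph{inside} the minimization. Meanwhile, for $j,k\le n$ your $\hat{P}_B^\Sigma(j,k)=P_B^\Sigma(j,k)+e_B(k)$: the dummies of $\hat{P}_B$ contribute a quantity independent of $j$ (its lower-left dummies lie below every $j\le n$, its right-column dummies lie to the right of every $k\le n$). Hence no choice of dummies in $\hat{P}_B$ can cancel the $j$-dependent term $e_A(j)$, the minimizer of the padded sum over $j\in[0:n]$ can differ from that of the original sum, and the restricted identity does not follow. Your claim (i) is also not automatic: when $j$ increases past $n$, the increment of $\hat{P}_A^\Sigma(i,j)$ can be $0$ (if the one in that column is an upper-right dummy sitting above row $i$) while the increment of $\hat{P}_B^\Sigma(j,k)$ is $-1$ (if the one in that row is a lower-left dummy of $\hat{P}_B$ in a column left of $k$), so the sum can strictly decrease for $j>n$.

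The repair is to place the padding where the distribution matrices cannot see it, which is what the paper does. First delete the empty rows of $P_A$ and the empty columns of $P_B$: these force the corresponding rows and columns of the product to be empty and are reinstated at the end, so only the empty \emph{columns} of $P_A$ and empty \emph{rows} of $P_B$ need dummies. Then prepend new rows at the \emph{top} of $P_A$ carrying the ones for its empty columns, and append new columns at the \emph{right} of $P_B$ carrying the ones for its empty rows, i.e.\ $P'_A=\begin{bmatrix}*\\ P_A\end{bmatrix}$ and $P'_B=\begin{bmatrix}P_B & *\end{bmatrix}$. Because $P^\Sigma$ sums over rows strictly below $i$ and columns strictly left of $k$, these dummies contribute nothing to $P_A'^\Sigma(i,\cdot)$ for $i$ in the original row range, nor to $P_B'^\Sigma(\cdot,k)$ for $k$ in the original column range; the min-plus identity then restricts exactly, with no offsets to track, and $P_C$ is read off the lower-left block of $P'_A\boxdot P'_B$. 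Your construction, by contrast, keeps the empty rows/columns around and hides the dummies in appended bottom rows, which is precisely what introduces the uncancellable $e_A(j)$.
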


Since a distribution matrix uniquely defines a sub-permutation matrix, $P_C$ is unique. We use $\boxdot$ to represent the implicit subunit-Monge matrix multiplication. In other words, we denote such $P_C$ as $P_C = P_A \boxdot P_B$.

\subsection{Basic tools}
In this section, we list some basic tools which we will repeatedly invoke in the following sections.

\begin{lemma}[Inverse Permutation]
\label{lem:tools_inv}
    Given a permutation $p : [n] \rightarrow [n]$, we can compute the inverse permutation $p^{-1}$ in $O(1)$ rounds. 
\end{lemma}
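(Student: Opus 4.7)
The plan is to reduce the computation of $p^{-1}$ to a single invocation of a fully-scalable constant-round MPC sort. Initially the permutation $p$ is distributed across the machines as an array: each machine holds $\tilde{O}(n^{1-\delta})$ consecutive pairs $(i, p(i))$. For each such pair, the machine locally emits a new key-value pair $(p(i), i)$. This local step requires no communication and uses only the space already occupied by the input, so it fits within the memory budget.

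Next, I would invoke the constant-round fully-scalable MPC sort (\cref{lem:tools_sort}) on the collection of $n$ pairs $(p(i), i)$, sorting by the first coordinate. Because $p$ is a permutation, the first coordinates are exactly $\{1, 2, \ldots, n\}$ with no duplicates, so after sorting the pair at global position $j$ in the sorted output is precisely $(j, p^{-1}(j))$. Discarding the first coordinate (or relying on its implicit position in the output array) yields the array representation of $p^{-1}$, again distributed across the machines in the standard way. The entire procedure uses $O(1)$ rounds, $O(n)$ total work, and respects the $\tilde{O}(n^{1-\delta})$ per-machine memory bound since the sort does.

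There is essentially no obstacle here beyond invoking the sorting primitive: the only thing to verify is that the emitted pairs fit in memory (they do, since their count equals the input size) and that the output format matches the array representation used elsewhere in the paper (it does, since a sorted-by-first-coordinate sequence of pairs $(j, p^{-1}(j))$ is exactly how the paper stores a permutation). Thus the lemma follows directly from the existence of a constant-round fully-scalable MPC sort.
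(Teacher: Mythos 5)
Your reduction to sorting is correct, but it takes a different (and heavier) route than the paper. The paper's proof is a one-line direct-routing argument: each machine, for every pair $(i, p_i)$ it holds, simply sends $(i, p_i)$ to the machine responsible for storing the $p_i$-th element of the output; since $p$ is a permutation, each machine receives exactly as many pairs as the block of indices it owns, so the per-machine communication stays within $s$ and the whole computation finishes in a single round with no auxiliary primitive. Your version instead emits $(p(i), i)$ and invokes the constant-round fully-scalable sort of \cref{lem:tools_sort}, then reads off $p^{-1}$ from the sorted order. This is sound — the keys are distinct, the space and round bounds are respected, and the output format matches the paper's array representation — but it spends the sorting machinery on a task where the destination machine of every element is already computable locally from $p_i$ alone. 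What your approach buys is uniformity (it would work even if the keys were not known to land in a balanced, predetermined layout); what the paper's approach buys is simplicity and the observation that permutation inversion is really just a deterministic routing step, not a sorting problem.
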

\begin{proof}
    Each machine does the following: For all permutation elements it contains, send the pair $(i, p_i)$ to the machine responsible for the $p_i$-th element. 
\end{proof}

We use the results from \cite{10.1007/978-3-642-25591-5_39} in the following lemmas.
\begin{lemma}[Prefix Sums]
\label{lem:tools_prefix}
    Given $n$ numbers, there is a deterministic algorithm to compute a prefix sum of the numbers. \cite{10.1007/978-3-642-25591-5_39}
\end{lemma}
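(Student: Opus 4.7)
The plan is to invoke the standard balanced-tree technique for prefix sums in MPC, and to make the round count constant by exploiting the $n^\delta$ branching afforded by the local memory budget. Concretely, I would partition the input sequence into $\Theta(n^{1-\delta})$ contiguous blocks, each of size $\Theta(n^{1-\delta})$, and assign one machine per block. On top of these blocks I would build a rooted tree of branching factor $H = \Theta(n^{\delta/2})$ (any $\Theta(n^{c\delta})$ with $c<1$ works), so the depth is $O(1/\delta) = O(1)$ since $\delta$ is a fixed constant of the model.

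The algorithm then runs a two-phase bottom-up / top-down sweep familiar from the parallel prefix literature. First, each leaf machine locally computes the sum of its block in a single round. Next, for $O(1)$ rounds, each internal node of the tree is handled by a dedicated machine that gathers the (at most $H$) partial sums from its children; since $H \le s$, this fits in one machine's local memory and can be summed in one round, and after $O(1/\delta)$ rounds the root knows the total sum. In the second phase, we propagate downward: each internal node, upon receiving its own ``prefix up to this node'' value from its parent, distributes to each child $c$ the value (parent prefix) plus the sum of all siblings of $c$ appearing earlier among the children; with $H$ children this again fits in one machine and takes one round per level, hence $O(1)$ total. Finally each leaf machine, knowing the prefix sum up to the start of its block, finishes its block locally in a single round.

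To make sure everything is fully scalable and fits the MPC resource constraints, I would verify three resource bounds: (i) total work is $O(n)$, summed over all tree nodes, so it trivially satisfies the $\tilde{O}(n)$ global memory requirement; (ii) per-machine memory is at most $\max(s, H) = \tilde{O}(n^{1-\delta})$ at leaves and $H = O(n^{\delta/2}) \le s$ at internal nodes; and (iii) per-machine communication per round is bounded by the number of siblings a node communicates with, i.e.\ $O(H) \le s$, so the MPC communication budget is respected. Routing the messages to the right machines is a standard application of the sort/shuffle primitive, which itself is known to take $O(1)$ rounds in the fully-scalable MPC model (this is the very result of~\cite{10.1007/978-3-642-25591-5_39} that the lemma cites, and would also justify \cref{lem:tools_sort} cited earlier in the paper).

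There is no real obstacle here; the only point that needs a careful sentence is the choice of the branching factor $H$, which must simultaneously satisfy $H = n^{\Omega(1)}$ (so that the tree has constant depth) and $H \le s$ (so that one machine can process a node's children in a single round). Both inequalities hold for any $H = n^{c\delta}$ with $0 < c < 1-\delta$, and since $0<\delta<1$ is a fixed constant, such a $c$ always exists. Given this, the sweep above is deterministic, fully scalable, and runs in $O(1)$ rounds, which is exactly what the cited result of~\cite{10.1007/978-3-642-25591-5_39} provides.
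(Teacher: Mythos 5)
Your construction is correct in substance, but note that the paper does not prove this lemma at all: it is imported as a black box from \cite{10.1007/978-3-642-25591-5_39}, the same reference that supplies sorting (\cref{lem:tools_sort}). What you have written is essentially the standard up-sweep/down-sweep proof that underlies that cited result, so you are not so much diverging from the paper as supplying the argument it chose to outsource; this buys self-containedness at the cost of re-deriving known machinery. Two small slips to fix if you keep the write-up: (i) partitioning $n$ elements into blocks of size $\Theta(n^{1-\delta})$ yields $\Theta(n^{\delta})$ blocks (matching the $m = O(n^{\delta})$ machines), not $\Theta(n^{1-\delta})$ blocks as you wrote; the rest of your depth and memory accounting is unaffected, since a tree with $n^{\delta}$ leaves and branching $n^{c\delta}$ still has depth $O(1/c) = O(1)$. (ii) The parenthetical ``any $\Theta(n^{c\delta})$ with $c<1$ works'' is too generous: for $\delta$ close to $1$, $H = n^{\delta/2}$ can exceed $s = \tilde{O}(n^{1-\delta})$; your later condition (any fixed $c$ with $c\delta \le 1-\delta$, e.g.\ $0 < c < 1-\delta$) is the right one and should replace it. With those corrections the sweep is deterministic, fully scalable, and runs in $O(1)$ rounds, exactly matching the statement the paper cites.
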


\begin{lemma}[Sorting]
\label{lem:tools_sort}
    Given $n$ numbers, there is a deterministic algorithm to sort them in $O(1)$ rounds. \cite{10.1007/978-3-642-25591-5_39}
\end{lemma}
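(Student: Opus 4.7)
The plan is to invoke the deterministic sample-sort result of \cite{10.1007/978-3-642-25591-5_39}, and for the proof it suffices to sketch how that algorithm instantiates in our MPC regime with $m = O(n^\delta)$ machines and $s = \tilde{O}(n^{1-\delta})$ space. The first step is that each machine locally sorts the $\tilde{O}(n^{1-\delta})$ items it currently holds using any sequential comparison sort; this is absorbed into a single round of local computation.

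Second, each machine picks a regularly-spaced sample from its sorted local list and sends those samples to a coordinator machine. For sufficiently small $\delta$ the full sample set fits on one machine directly; for larger $\delta$, the gathering is itself carried out recursively, but with only $O(1/\delta) = O(1)$ levels. The coordinator sorts the aggregated samples and broadcasts $m-1$ global splitters. Each machine then partitions its local list by the splitters and routes each bucket to its destination machine in one communication round. A standard regular-sampling analysis guarantees that every bucket has size $O(s)$, so no machine exceeds its memory bound. A final local sort on the items each machine receives produces the globally sorted output.

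The main step to verify, and the reason we cite rather than reprove, is the splitter-balance guarantee: regular sampling must yield bucket sizes within a constant factor of $s$ deterministically, and the recursive sample-gathering subroutine must terminate in $O(1/\delta)$ levels for every fixed $\delta \in (0,1)$. Both quantitative points are exactly what the cited reference establishes, so the lemma follows by direct appeal.
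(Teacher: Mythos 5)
Your proposal is correct and takes essentially the same route as the paper, which gives no proof at all and simply cites \cite{10.1007/978-3-642-25591-5_39} for deterministic $O(1)$-round sorting; your sample-sort sketch with the quantitative splitter-balance and recursion-depth guarantees deferred to that reference is exactly that appeal. (One minor slip: the recursion depth behaves like $O(1/(1-\delta))$ rather than $O(1/\delta)$, but both are $O(1)$ for any fixed $\delta \in (0,1)$, so the conclusion is unaffected.)
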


In this paper, we define the \textit{offline rank searching} on array as the following: Given an array of numbers $A = \{a_1, a_2, \ldots, a_n\}$ and queries $Q = \{q_1, q_2, \ldots, q_k\}$, we should compute the length-$k$ array $R = \{r_1, r_2, \ldots, r_k\}$ where $r_i$ is equal to the number of elements in $A$ which is smaller than $q_i$. Using the lemmas above, we can solve the offline rank-searching problem in $O(1)$ rounds.

\begin{lemma}[Offline Rank Searching]
\label{lem:tools_search}
    Given $n$ numbers and $k$ query points, there is a deterministic algorithm to solve the offline rank searching problem in $O(1)$ rounds.
\end{lemma}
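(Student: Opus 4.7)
The plan is to reduce offline rank searching to sorting plus prefix sums, each of which already runs in $O(1)$ rounds by \cref{lem:tools_sort} and \cref{lem:tools_prefix}. First I would build a single combined multiset $S$ of size $n+k$ whose elements are triples $(v,\tau,\ell)$, where $v$ is either $a_j$ (with tag $\tau = \mathrm{A}$ and label $\ell = j$) or $q_i$ (with tag $\tau = \mathrm{Q}$ and label $\ell = i$). Distributing $S$ over the machines is trivial since both $A$ and $Q$ are already stored across the machines and we just need to concatenate their tagged versions.

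Next I would sort $S$ in $O(1)$ rounds using \cref{lem:tools_sort}, with the comparison key chosen so that an A-element $(a_j,\mathrm{A},j)$ precedes a Q-element $(q_i,\mathrm{Q},i)$ whenever $a_j < q_i$, and an A-element with value equal to a query precedes the query (this tie-breaking realizes the strict inequality $a_j < q_i$ demanded by the problem). After sorting, for each position $p$ in the sorted order I assign an indicator $\chi_p = 1$ if the item at position $p$ is an A-element and $\chi_p = 0$ otherwise. Running a prefix-sum on $(\chi_p)$ via \cref{lem:tools_prefix} yields, at each position $p$ holding a Q-element with label $i$, the exact count $r_i = |\{j : a_j < q_i\}|$.

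Finally I would ship each computed answer back to where it is expected. Every Q-element in the sorted order knows its original query index $i$, so each machine sends the pair $(i,r_i)$ to the machine that owns the $i$-th slot of the output array $R$. This single routing step takes $O(1)$ rounds in the MPC model, analogous to the routing in the proof of \cref{lem:tools_inv}, because each machine sends and receives at most $s$ words: the outgoing volume is bounded by the local share of $S$ and the incoming volume is bounded by the local share of $R$.

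The whole procedure is a constant number of applications of already-constant-round primitives, so the total round complexity is $O(1)$. The only real thing to verify is that the total data volume $O(n+k)$ respects the MPC budget, which it does because the model permits $m = \Theta((n+k)^\delta)$ machines with $s = \tilde O((n+k)^{1-\delta})$ local memory each, and because none of the steps (sorting, prefix sum, routing) creates more than a constant-factor blow-up in communication per machine. I do not expect a real obstacle here; the only subtlety worth pointing out explicitly in the write-up is the tie-breaking rule that converts the sorted rank of a query into a count of strictly smaller array entries.
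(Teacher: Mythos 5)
Your overall route is exactly the paper's: tag the array elements and the queries, sort the combined list in $O(1)$ rounds via \cref{lem:tools_search}'s ingredients (\cref{lem:tools_sort}), assign indicator $1$ to array elements and $0$ to queries, take a prefix sum via \cref{lem:tools_prefix}, and route each answer back to its output slot. The routing and space accounting you spell out are fine and implicit in the paper's version.

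However, the one step you single out as the real subtlety --- the tie-breaking rule --- is stated backwards. You require that an A-element whose value equals a query \emph{precede} the query in the sorted order. With that convention, the prefix sum of the indicators at the query's position counts every $a_j$ with $a_j \le q_i$, so you compute $|\{j : a_j \le q_i\}|$ rather than the required $r_i = |\{j : a_j < q_i\}|$; the two differ whenever some $a_j$ equals $q_i$. The correct convention is the opposite one, used in the paper's proof: when values tie, the query point must precede the array element, so that equal-valued entries land after the query and are excluded from its prefix sum. This is a local and easily repaired slip, but as written the algorithm returns the wrong counts in the presence of ties, and since the lemma is stated for arbitrary inputs (and is invoked later on data where collisions of values are not excluded), the tie-break direction does matter.
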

\begin{proof}
Duplicate all numbers and query points and sort them in a nondecreasing order of values. If two elements have the same value, ties are broken so that the query point precedes the number. By \cref{lem:tools_sort} this takes $O(1)$ rounds. Now, when we consider each number as a point of value $1$ and queries as a point of value $0$, the answer $r_i$ corresponds to a prefix sum of such values in a given query point. Computing a prefix sum can be done in $O(1)$ rounds by \cref{lem:tools_prefix}.
\end{proof}

\section{Proof of the Main Theorem}
In this section, we prove the following main theorem:

\main*

\subsection{Preliminaries}\label{sec41}
Let $G = n^{(1 - \delta)}$ and $H = n^{(1-\delta) / 10}$. We will split $P_A$ into $H$ matrices of size $n \times \frac{n}{H}$ in parallel to columns, and $P_B$ into $H = n^{(1-\delta) / 10}$ matrices of size $\frac{n}{H} \times n$ in parallel to rows. We assume $n \ge G = H^{10}$ as we can do all computations locally otherwise. We also assume $n$ to be the multiple of $H$, but this condition is unnecessary and only serves to simplify the explanation. We will denote the splitted matrices of $P_A$ as $P_{A, 1}, P_{A, 2}, \ldots, P_{A, H}$ from left-to-right order, and $P_B$ as $P_{B, 1}, P_{B, 2},\ldots, P_{B, H}$ from top-to-bottom order. Let $P_{C, i} = P_{A, i} \boxdot P_{B, i}$ for all $i \in [H]$. For all $i \in [H]$ $P_{C, i}$ is a $n \times n$ matrix.

We need to remove the empty rows of $P_{A, i}$ and columns of $P_{B, i}$ to make them compact enough for recursion, which we will denote as $P^\prime_{A, i}$ and $P^\prime_{B, i}$. For this, we sort the nonzero row index of each submatrix $P_{A, i}$ and relabel the row accordingly by the rank of their index in the sorted indices. Both requires $O(1)$ round by \cref{lem:tools_inv} and \cref{lem:tools_sort}. After the procedure, we can create a mapping $M_A : \langle n \rangle \rightarrow [H] \times \langle \frac{n}{H} \rangle$ that takes the row index $\hat{i}$ of $P_A$ and returns the pair $M_A(\hat{i}) = (M_A(\hat{i})_{sub}, M_A(\hat{i})_{idx})$, which indicates that the $\hat{i}$-th row is in $M_A(\hat{i})_{sub}$-th subproblem and is indexed as $M_A(\hat{i})_{idx}$-th row there. We also define the inverse of such mapping $M_A^{-1} = [H] \times \langle \frac{n}{H} \rangle \rightarrow \langle n \rangle$. We proceed identically for the columns of $P_B$, creating a mapping $M_B : \langle n \rangle \rightarrow [H] \times \langle \frac{n}{H} \rangle$ and the inverse $M_B^{-1} = [H] \times \langle \frac{n}{H} \rangle \rightarrow \langle n \rangle$.

As each $P^\prime_{A, i}$ and $P^\prime_{B, i}$ are of size $\frac{n}{H} \times \frac{n}{H}$, we can recursively compute $P^\prime_{C, i} = P^\prime_{A, i} \boxdot P^\prime_{B, i}$ for each $i \in [H]$. By the method we constructed, we can derive a relation between $P_{C, i}$ and $P^\prime_{C, i}$ as $P_{C, i}(\hat{x}, \hat{y}) = P^\prime_{C, i}(M_A(\hat{x})_{idx}, M_B(\hat{y})_{idx})$ for $\hat{x}, \hat{y}$ with $M_A(\hat{x})_{sub} = M_B(\hat{y})_{sub} = i$, and $0$ otherwise.

Given the result of subproblem $P_{C, i}$ and the mapping $M_A, M_A^{-1}, M_B, M_B^{-1}$, we need to combine them to the whole subproblem. Ideally, we want a formula representing $P_C$ only using the $P_{C, i}$. We show that this is possible.

\begin{lemma}\label{lem:idk}

For all $0 \le i, k \le n, 1 \le q \le H$, it holds that 
\begin{align*}\min_{0 \le j \le \frac{n}{H}}(P_A^\Sigma(i, (q-1)\frac{n}{H} + j) + P_B^\Sigma((q-1)\frac{n}{H} + j, k)) \\
= \sum_{1 \le x \le q - 1} P_{C, x}^\Sigma(i, n) + P^\Sigma_{C, q}(i, k) + \sum_{q+1 \le x \le H} P_{C, x}^\Sigma(0, k)
\end{align*}
\end{lemma}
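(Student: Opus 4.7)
The plan is to prove the identity by a direct algebraic decomposition. The key observation is that for $j$ in the range $[0, n/H]$, the expression $P_A^\Sigma(i, (q-1)n/H + j) + P_B^\Sigma((q-1)n/H + j, k)$ cleanly splits into a $j$-independent part coming from the strips $x \ne q$ and a $j$-dependent part living entirely inside the $q$-th pair of strips. Once this split is made, the $j$-dependent piece is exactly the defining formula for $P_{C,q}^\Sigma(i,k)$, and the remaining constant pieces can be massaged into the two outer sums on the right-hand side after one short monotonicity argument.

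First I would carry out the decomposition. Since $P_{A,1}, \ldots, P_{A,H}$ partition the columns of $P_A$, additivity of the distribution-matrix construction over column strips gives $P_A^\Sigma(i, (q-1)n/H + j) = \sum_{x=1}^{q-1} P_{A,x}^\Sigma(i, n/H) + P_{A,q}^\Sigma(i, j)$, and the analogous row-strip decomposition of $P_B$ yields $P_B^\Sigma((q-1)n/H + j, k) = P_{B,q}^\Sigma(j, k) + \sum_{x=q+1}^{H} P_{B,x}^\Sigma(0, k)$. Adding these, pulling the two $j$-independent sums outside of $\min_{0 \le j \le n/H}$, and applying $P_{C,q}^\Sigma(i,k) = \min_j (P_{A,q}^\Sigma(i,j) + P_{B,q}^\Sigma(j,k))$ from the definition of $P_{C,q} = P_{A,q} \boxdot P_{B,q}$, I arrive at the right-hand side of the claimed identity except with $P_{A,x}^\Sigma(i, n/H)$ in place of $P_{C,x}^\Sigma(i, n)$ for $x < q$, and $P_{B,x}^\Sigma(0, k)$ in place of $P_{C,x}^\Sigma(0, k)$ for $x > q$.

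The main step, and what I expect to be the only real obstacle, is to justify the two marginal-preservation identities $P_{A,x}^\Sigma(i, n/H) = P_{C,x}^\Sigma(i, n)$ and $P_{B,x}^\Sigma(0, k) = P_{C,x}^\Sigma(0, k)$. For the first, I would unfold $P_{C,x}^\Sigma(i, n) = \min_j (P_{A,x}^\Sigma(i, j) + P_{B,x}^\Sigma(j, n))$ and use that $P_B$ being a full permutation forces the row strip $P_{B,x}$ to have exactly one $1$ in each of its $n/H$ rows, hence $P_{B,x}^\Sigma(j, n) = n/H - j$. At the same time, $P_A$ being a permutation makes $P_{A,x}^\Sigma(i, \cdot)$ increase by $0$ or $1$ per unit step of $j$, since each column of $P_{A,x}$ contains exactly one $1$. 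Consequently the minimand is non-increasing in $j$, so the minimum is attained at $j = n/H$ and equals $P_{A,x}^\Sigma(i, n/H)$. The second identity is completely symmetric: using $P_{A,x}^\Sigma(0, j) = j$ (from one $1$ per column of $P_{A,x}$) and the fact that $P_{B,x}^\Sigma(\cdot, k)$ decreases by at most $1$ per unit step of $j$, the minimand becomes non-decreasing in $j$ and its minimum is attained at $j = 0$, giving $P_{B,x}^\Sigma(0, k)$. It is worth emphasizing that this monotonicity argument is precisely where the full-permutation hypothesis on $P_A$ and $P_B$ in the main theorem is used; small examples show that the identities can fail for genuine sub-permutations, so the same decomposition would need extra care in the setting of \cref{thm:main_ext}.
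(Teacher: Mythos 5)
Your proof is correct and follows essentially the same route as the paper's: the identical strip decomposition of $P_A^\Sigma(i,(q-1)\frac{n}{H}+j)$ and $P_B^\Sigma((q-1)\frac{n}{H}+j,k)$, followed by the same monotonicity argument establishing $P_{A,x}^\Sigma(i,\frac{n}{H}) = P_{C,x}^\Sigma(i,n)$ and $P_{B,x}^\Sigma(0,k) = P_{C,x}^\Sigma(0,k)$ via $P_{B,x}^\Sigma(j,n)=\frac{n}{H}-j$ (resp.\ $P_{A,x}^\Sigma(0,j)=j$) and the unit-bounded increments of the other term. Your added observation that these marginal identities use the full-permutation hypothesis and can fail for genuine sub-permutations is accurate and consistent with the paper handling the sub-permutation case separately by padding in \cref{thm:main_ext}.
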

\begin{proof}
From the definition, we obtain 
\begin{multline*}\min_{0 \le j \le \frac{n}{H}}(P_A^\Sigma(i, (q-1)\frac{n}{H} + j) + P_B^\Sigma((q-1)\frac{n}{H} + j, k)) \\= 
\sum_{1 \le x \le q - 1} P_{A, x}^\Sigma(i, \frac{n}{H}) + \min_{0 \le j \le \frac{n}{H}}(P_{A, q}^\Sigma(i, j) + P_{B, q}^\Sigma(j,k)) \\+ \sum_{q+1 \le x \le H } P_{B, x}^\Sigma(0, k)
\end{multline*}

We show that $P_{A, x}^\Sigma(i, \frac{n}{H}) = P_{C, x}^\Sigma(i, n)$ for all $x \in [H]$. If we expand the term by definition, we have
\begin{align*}
P^\Sigma_{C, x}(i, n)= \min_{0 \le j \le n/H}(P^\Sigma_{A, x}(i, j) + P^\Sigma_{B, x}(j, n))
\end{align*}

Since the rows of $P_{B, x}$ sums to $1$, we have $P^\Sigma_{B, x}(j, n) = \frac{n}{H} - j$. On the other hand, we have $P^\Sigma_{A, x}(i, j+1) - P^\Sigma_{A, x}(i, j) \le 1$. Thus the RHS is monotonically decreasing over $j$, and the minimum is attained in $j = \frac{n}{H}$. The part $P_{B, x}^\Sigma(0, k) = P_{C, x}^\Sigma(0, k)$ can be proved in an identical way. $P_{C, q}(i, k) = \min_{0 \le j \le n/H}(P_{A, q}(i, j) + P_{B, q}(j,k))$ follows from the definition $P_{C, q} = P_{A, q} \boxdot P_{B, q}$.
\end{proof}

\begin{lemma}
\label{lem:pointwisemin}
For $1 \le q \le H$ and $0 \le i, j \le n$, let $F_{q}(i, j)$ be

\begin{align*}
    F_{q}(i, j) = \sum_{1 \le x \le q-1} P_{C, x}^\Sigma(i, n) + P^\Sigma_{C, q}(i, j) + \sum_{q+1 \le x \le H} P_{C, x}^\Sigma(0, j)
\end{align*}

For all $0 \le i, j \le n$, it holds that $P_C^\Sigma(i, j) = \min_{1 \le q \le H} F_q(i, j)$.
\end{lemma}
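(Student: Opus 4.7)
The plan is to observe that this lemma is essentially a corollary of \cref{lem:idk} combined with partitioning the $(\text{min}, +)$ summation range into $H$ consecutive blocks. By the definition of $\boxdot$, we have $P_C^\Sigma(i, j) = \min_{0 \le k \le n}\bigl(P_A^\Sigma(i, k) + P_B^\Sigma(k, j)\bigr)$, and since $P_A$ is split columnwise and $P_B$ rowwise into $H$ pieces of width $\frac{n}{H}$, the index $k$ naturally lives in exactly one of the blocks $\bigl[(q-1)\frac{n}{H} : q \frac{n}{H}\bigr]$ for $q \in [H]$.

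Concretely, I would first write
\begin{align*}
P_C^\Sigma(i,j) \;=\; \min_{1 \le q \le H}\; \min_{0 \le j' \le \frac{n}{H}}\Bigl(P_A^\Sigma\bigl(i,\,(q-1)\tfrac{n}{H}+j'\bigr) + P_B^\Sigma\bigl((q-1)\tfrac{n}{H}+j',\,j\bigr)\Bigr),
\end{align*}
noting that the blocks share endpoints but this only causes harmless duplication of values in the outer $\min$. Then I would invoke \cref{lem:idk} (with $k=j$) on the inner minimum for each fixed $q$, which rewrites it exactly as the expression defining $F_q(i, j)$. Substituting yields $P_C^\Sigma(i,j) = \min_{1 \le q \le H} F_q(i,j)$, which is the claim.

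There is no real technical obstacle here beyond verifying the partition step rigorously: the only thing worth checking is that the ranges of $j'$ in the inner minima collectively cover $\{0, 1, \ldots, n\}$ for the outer index $k$, and that the ``telescoping'' pieces $\sum_{x<q} P_{C,x}^\Sigma(i,n)$ and $\sum_{x>q} P_{C,x}^\Sigma(0,j)$ produced by \cref{lem:idk} are indeed the same quantities appearing in the definition of $F_q$. Both are immediate by inspection, so the proof is short and amounts to a single application of \cref{lem:idk} together with a $\min$-of-$\min$ decomposition.
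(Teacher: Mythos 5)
Your proposal is correct and is exactly the argument the paper intends: the paper's proof is the one-line ``Follows from \cref{lem:idk}'', and your min-of-min decomposition over the $H$ blocks (with the harmless overlap at block endpoints) is precisely the step that makes it follow. Nothing is missing.
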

\begin{proof}
    Follows from \cref{lem:idk}
\end{proof}

\cref{lem:pointwisemin} shows that we can simply take a pointwise minimum of $F_q(i, j)$ to obtain a distribution matrix. Using such an approach directly requires a lot of space for each machine. Instead, let's define $opt(i, j)$ be the minimum integer in $[q]$ such that $F_{opt(i, j)}(i, j)  = \min_{1 \le q \le H} F_q(i, j)$. We will show that $opt(i, j)$ is monotonic on $i$ and $j$ and derive several nice properties that simplify its computation and representation. Then, we will show that we don't need the value $F_q(i, j)$, and the monotonic table $opt(i, j)$ is enough to produce the resulting permutation matrix $P_C$.

For $1 \le q < r \le H$ and $0 \le i, j \le n$, let $\delta_{q, r}(i, j) = F_{q}(i, j) - F_{r}(i, j)$, which in a more explicit form:
\begin{align*}
    \delta_{q, r}(i, j) = P^\Sigma_{C, q}(i, j) + \sum_{q+1 \le x \le r} P_{C, x}^\Sigma(0, j) 
    - \sum_{q \le x \le r-1} P_{C, x}^\Sigma(i, n) - P^\Sigma_{C, r}(i, j)
\end{align*}

We will prove two lemmas related to the monotonicity of $\delta_{q, r}$.

\begin{lemma}
\label{lem:inc1}
    For all $1 \le q < r \le H$ and $0 \le i \le n, 0 \le j \le n - 1$, we have $0 \le \delta_{q, r}(i, j+1) - \delta_{q, r}(i, j) \le 1$.
\end{lemma}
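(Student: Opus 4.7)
The plan is to telescope the difference $\delta_{q,r}(i,j+1) - \delta_{q,r}(i,j)$ using the explicit formula given just above the lemma, and then reinterpret what remains as counts of $1$'s in a single column of the sub-permutation matrices $P_{C,x}$.

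First I would observe that the summand $\sum_{q \le x \le r-1} P_{C,x}^\Sigma(i,n)$ has no $j$-dependence, so it cancels in the difference. Thus the quantity I must bound is
$$\bigl[P_{C,q}^\Sigma(i,j+1)-P_{C,q}^\Sigma(i,j)\bigr]+\sum_{x=q+1}^{r}\bigl[P_{C,x}^\Sigma(0,j+1)-P_{C,x}^\Sigma(0,j)\bigr]-\bigl[P_{C,r}^\Sigma(i,j+1)-P_{C,r}^\Sigma(i,j)\bigr].$$
Directly from the definition of the distribution matrix, each bracketed term equals the number of $1$'s of the corresponding $P_{C,x}$ in column $\hat{j}$ and in the prescribed row range: rows $\langle i:n\rangle$ for the two outer terms, and all rows for the inner sum.

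The key structural ingredient is the block description $P_{C,x}(\hat{x}_{\text{row}},\hat{y}_{\text{col}})=P'_{C,x}\bigl(M_A(\hat{x}_{\text{row}})_{idx},M_B(\hat{y}_{\text{col}})_{idx}\bigr)$ when $M_A(\hat{x}_{\text{row}})_{sub}=M_B(\hat{y}_{\text{col}})_{sub}=x$, and $0$ otherwise. For the fixed column $\hat{j}$, this implies that \textbf{at most one} of the matrices $P_{C,1},\ldots,P_{C,H}$ can carry any $1$ in column $\hat{j}$, namely the one indexed by $x^*:=M_B(\hat{j})_{sub}$. Writing $a_x$ for the number of $1$'s of $P_{C,x}$ in column $\hat{j}$ restricted to rows $\langle i:n\rangle$, and $c_x$ for the same count over all rows, I then have $a_x,c_x\in\{0,1\}$, $a_x\le c_x$, and both are zero unless $x=x^*$.

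The target expression becomes $a_q+\sum_{x=q+1}^{r}c_x-a_r$, and a short case analysis on where $x^*$ lies relative to $[q,r]$ finishes things: if $x^*\notin[q,r]$ everything is $0$; if $x^*=q$ only $a_q\in\{0,1\}$ survives; if $q<x^*<r$ only $c_{x^*}=1$ survives; if $x^*=r$ the value is $1-a_r\in\{0,1\}$. In every case the difference lies in $\{0,1\}$, proving both inequalities. I do not anticipate any real obstacle here — the only subtle spot is keeping the index ranges $[q,r-1]$ and $[q+1,r]$ straight and handling the boundary cases $x^*=q$ and $x^*=r$ carefully, since a sign slip there would be easy to make.
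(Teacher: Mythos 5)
Your proposal is correct and follows essentially the same route as the paper: cancel the $j$-independent term, observe that the remaining differences count the (at most one) nonzero entry in column $\hat{j}$, and case-split on $M_B(\hat{j})_{sub}$ relative to $q$ and $r$. Your $a_x$/$c_x$ bookkeeping just makes explicit the same case analysis the paper performs (including the implicit fact that $c_{x^*}=1$, which holds because $P'_{C,x^*}$ is a permutation matrix).
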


\begin{proof}
    We have 
    \begin{align*}
        \delta_{q, r}(i, j+1) - \delta_{q, r}(i, j) = (P^\Sigma_{C, q}(i, j+1) - P^\Sigma_{C, q}(i, j)) \\+ \sum_{q+1 \le x \le r} (P_{C, x}^\Sigma(0, j+1) - P_{C, x}^\Sigma(0, j))  \\- (P^\Sigma_{C, r}(i, j+1) -P^\Sigma_{C, r}(i, j))
    \end{align*}
    The value of RHS only depends on the point in the column $\hat{j}$. We divide the cases by $M_B(\hat{j})_{sub}$:
    \begin{itemize}
        \item if $M_B(\hat{j})_{sub} < q$, the RHS is $0$.
        \item if $M_B(\hat{j})_{sub} = q$, the RHS is $1$ if the corresponding point has a row index of at least $i$ and $0$ otherwise.
        \item if $q < M_B(\hat{j})_{sub} < r$, the RHS is $1$.
        \item if $M_B(\hat{j})_{sub} = r$, the RHS is $1$ if the corresponding point has a row index of at most $i$ and $0$ otherwise.
        \item if $r < M_B(\hat{j})_{sub}$, the RHS is $0$.
    \end{itemize}
    Hence, for all cases, it is either $0$ or $1$.
\end{proof}

\begin{lemma}
\label{lem:inc2}
    For all $1 \le q < r \le H$ and $0 \le i \le n - 1, 0 \le j \le n$, we have $0 \le \delta_{q, r}(i+1, j) - \delta_{q, r}(i, j) \le 1$.
\end{lemma}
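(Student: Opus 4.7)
The plan is to mirror the proof of \cref{lem:inc1}, with $i$ playing the role that $j$ played there. First I would expand the difference directly from the definition of $\delta_{q,r}$; since the middle group $\sum_{q+1 \le x \le r} P^\Sigma_{C,x}(0,j)$ does not involve $i$, it cancels, leaving only three terms in which the row index has been incremented. By the definition of the distribution matrix, incrementing the first coordinate removes the row $\hat{i}$ from the sum, so each of these three differences equals (minus) the number of $1$-entries of the corresponding $P_{C,x}$ lying in row $\hat{i}$, with a column restriction $<j$ on the first and last pieces and no restriction on the middle sum.

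The key structural observation I need is that row $\hat{i}$ contains exactly one $1$ across all of $P_{C,1},\ldots,P_{C,H}$. This follows by combining the identity $P^\Sigma_{A,x}(i, n/H) = P^\Sigma_{C,x}(i,n)$ established inside \cref{lem:idk}, which forces the row count of $P_{C,x}$ at $\hat{i}$ to agree with that of $P_{A,x}$ at $\hat{i}$, with the fact that $P_A$ is a permutation matrix whose $\hat{i}$-th row has a unique $1$ placed in exactly one vertical strip. Denote that unique strip index by $x^{*} = M_A(\hat{i})_{sub}$.

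From here the argument becomes a routine case analysis on the position of $x^{*}$ relative to $[q,r]$, exactly parallel to the five-case split in \cref{lem:inc1}. When $x^{*} \notin \{q,\ldots,r\}$ every count vanishes and the difference is $0$; when $x^{*} = q$ the strip $q$ contributes a $1$ to the middle sum and the expression collapses to $1$ minus an indicator in $\{0,1\}$ (whether the $1$ of row $\hat{i}$ in $P_{C,q}$ has column $<j$); when $q < x^{*} < r$ the middle sum contributes a clean $1$ while both column-restricted terms vanish; and when $x^{*} = r$ the strip $r$ is excluded from the middle sum, so only the last column-restricted indicator survives, again in $\{0,1\}$. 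In every case the difference lands in $\{0,1\} \subseteq [0,1]$, which gives the desired bound.

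I do not expect any genuine obstacle: this is an almost mechanical transposition of \cref{lem:inc1} from the column direction to the row direction. The only non-trivial ingredient beyond bookkeeping is noting that exactly one strip hosts the $1$ in row $\hat{i}$, but this is an immediate consequence of the calculation already carried out inside \cref{lem:idk}. After invoking it, the remaining work is pure case enumeration.
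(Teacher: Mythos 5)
Your proof is correct and follows essentially the same route as the paper: expand the difference, note that the $(0,j)$ terms cancel and that what remains depends only on the unique nonzero of row $\hat{i}$, then case-split on the strip $x^{*}=M_A(\hat{i})_{sub}$ relative to $q$ and $r$, with the signs and indicators handled exactly as in the paper's five cases. The only cosmetic difference is that you justify the "exactly one $1$ in row $\hat{i}$" fact via the identity $P^\Sigma_{A,x}(i,n/H)=P^\Sigma_{C,x}(i,n)$ from \cref{lem:idk}, whereas the paper reads it off directly from the construction of $P_{C,x}$ from $P'_{C,x}$; both are valid.
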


\begin{proof}
    We have 
    \begin{align*}
        \delta_{q, r}(i+1, j) - \delta_{q, r}(i, j) = -(P^\Sigma_{C, q}(i, j) - P^\Sigma_{C, q}(i+1, j)) \\ +\sum_{q \le x \le r - 1} (P_{C, x}^\Sigma(i, n) - P_{C, x}^\Sigma(i+1, n))  \\+ (P^\Sigma_{C, r}(i, j) -P^\Sigma_{C, r}(i+1, j))
    \end{align*}
    The value of RHS only depends on the point in the row $\hat{i}$. We divide the cases by $M_A(\hat{i})_{sub}$:
    \begin{itemize}
        \item if $M_A(\hat{i})_{sub} < q$, the RHS is $0$.
        \item if $M_A(\hat{i})_{sub} = q$, the RHS is $1$ if the corresponding point has a column index of at least $j$ and $0$ otherwise.
        \item if $q < M_A(\hat{i})_{sub} < r$, the RHS is $1$.
        \item if $M_A(\hat{i})_{sub} = r$, the RHS is $1$ if the corresponding point has a column index of at most $j$ and $0$ otherwise.
        \item if $r < M_A(\hat{i})_{sub}$, the RHS is $0$.
    \end{itemize}
    Hence, for all cases, it is either $0$ or $1$.
\end{proof}

From \cref{lem:inc1} and \cref{lem:inc2}, we can observe that $\delta_{q, r}(i, j)$ is an increasing function over $i$ and $j$. Hence, two areas with $\delta_{q, r}(i, j) < 0$ and $\delta_{q, r}(i, j) \geq 0$ are divided by a monotone chain that starts from $(n, 0)$ and ends at $(0, n)$ while moving in an upper-right direction. 

We can derive the following from \cref{lem:inc1} and \cref{lem:inc2}.
\begin{lemma}
\label{lem:opt1}
    For all $0 \le i \le n - 1, 0 \le j \le n$, we have $opt(i, j) \le opt(i + 1, j)$.
\end{lemma}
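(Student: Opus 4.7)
The plan is to derive \cref{lem:opt1} as a direct consequence of the monotonicity of $\delta_{q,r}$ in the row index $i$, which is exactly \cref{lem:inc2}. Intuitively, as we increase $i$, the differences $F_q(i,j) - F_r(i,j)$ only grow (for $q < r$), so the argmin of $F_q(i,j)$ over $q$ can only shift to the right.

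More concretely, I would argue by contradiction. Set $q^\star = opt(i,j)$ and $r^\star = opt(i+1,j)$, and suppose toward contradiction that $q^\star > r^\star$. Since $q^\star$ is defined as the \emph{minimum} index attaining $\min_{q'} F_{q'}(i,j)$, any strictly smaller index must give a strictly larger value; in particular, because $r^\star < q^\star$, we have the strict inequality $F_{r^\star}(i,j) > F_{q^\star}(i,j)$, i.e.,
\begin{align*}
\delta_{r^\star, q^\star}(i,j) \;=\; F_{r^\star}(i,j) - F_{q^\star}(i,j) \;>\; 0.
\end{align*}

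Now I would invoke \cref{lem:inc2}, which gives $\delta_{r^\star, q^\star}(i+1,j) \ge \delta_{r^\star, q^\star}(i,j) > 0$, and hence $F_{r^\star}(i+1,j) > F_{q^\star}(i+1,j)$. This directly contradicts the definition of $r^\star = opt(i+1,j)$, which forces $F_{r^\star}(i+1,j) \le F_{q'}(i+1,j)$ for every $q' \in [H]$, in particular for $q' = q^\star$. Therefore $q^\star \le r^\star$, as claimed.

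There is no real obstacle here; the entire content is already in \cref{lem:inc2}, and the only subtlety is remembering that $opt$ is defined as the \emph{minimum} argmin, so that a smaller-indexed competitor must yield a strictly larger value of $F$ — this is what converts the weak monotonicity of $\delta_{q,r}$ in $i$ into a monotonicity statement for $opt$. The symmetric statement, $opt(i,j) \le opt(i,j+1)$, would follow identically from \cref{lem:inc1}.
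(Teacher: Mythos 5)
Your proof is correct and follows essentially the same contradiction argument as the paper: assume the argmin decreases, note that the minimality of $opt$ forces a strict inequality at the earlier point, and contradict the monotonicity of $\delta_{q,r}$. In fact, you invoke the right lemma (\cref{lem:inc2}) for the row-direction statement, whereas the paper's written proof of \cref{lem:opt1} appears to have swapped the $i$/$j$ roles and the references to \cref{lem:inc1}/\cref{lem:inc2} with those of \cref{lem:opt2}; the substance is identical.
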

\begin{proof}
Suppose $opt(i, j) > opt(i, j + 1)$. Let $k_1 = opt(i, j + 1), k_2 = opt(i, j)$, we have\\ $\delta_{k_1, k_2}(i, j + 1) \le 0$ and $\delta_{k_1, k_2}(i, j) > 0$, which contradicts \cref{lem:inc1}.
\end{proof}
\begin{lemma}
\label{lem:opt2}
    For all $0 \le i \le n, 0 \le j \le n - 1$, we have $opt(i, j) \le opt(i, j+1)$.
\end{lemma}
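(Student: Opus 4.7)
The plan is to prove Lemma \ref{lem:opt2} by contradiction, mirroring exactly the argument that is used for the row-direction statement in Lemma \ref{lem:opt1} but invoking the column-direction monotonicity of $\delta_{q,r}$ from \cref{lem:inc1} in place of \cref{lem:inc2}.

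More concretely, I would assume for contradiction that there exist $i$ and $j$ with $opt(i, j) > opt(i, j+1)$, and set $k_1 = opt(i, j+1)$ and $k_2 = opt(i, j)$, so that $k_1 < k_2$. From the definition of $opt$ as the \emph{smallest} index attaining the pointwise minimum of the $F_q$'s, at the point $(i, j)$ the index $k_2$ is strictly better than the smaller index $k_1$, while at the point $(i, j+1)$ the smaller index $k_1$ is at least as good as $k_2$. Translated into the $\delta$ notation of \cref{sec41} this says
\begin{align*}
\delta_{k_1, k_2}(i, j) \;=\; F_{k_1}(i, j) - F_{k_2}(i, j) \;>\; 0,
\qquad
\delta_{k_1, k_2}(i, j+1) \;=\; F_{k_1}(i, j+1) - F_{k_2}(i, j+1) \;\le\; 0.
\end{align*}

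But \cref{lem:inc1} gives $\delta_{k_1, k_2}(i, j+1) - \delta_{k_1, k_2}(i, j) \geq 0$, so $\delta_{k_1, k_2}(i, j+1) \geq \delta_{k_1, k_2}(i, j) > 0$, contradicting $\delta_{k_1, k_2}(i, j+1) \le 0$. This yields the claimed monotonicity $opt(i, j) \le opt(i, j+1)$.

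There is no substantive obstacle here: the entire content of the lemma is already packaged in the column-monotonicity of $\delta_{q,r}$ proved in \cref{lem:inc1}, and the tie-breaking convention baked into the definition of $opt$ (smallest index attaining the minimum) is exactly what lets the strict inequality on one side combine with the weak inequality on the other to produce a contradiction. The proof is essentially a two-line verbatim adaptation of the proof of \cref{lem:opt1}.
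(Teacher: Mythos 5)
Your proof is correct and is essentially the paper's own argument: assume $opt(i,j) > opt(i,j+1)$, use the tie-breaking in the definition of $opt$ to get $\delta_{k_1,k_2}(i,j) > 0$ and $\delta_{k_1,k_2}(i,j+1) \le 0$, and contradict the column-monotonicity of $\delta_{q,r}$ from \cref{lem:inc1}. Note that in the paper the proofs attached to \cref{lem:opt1} and \cref{lem:opt2} appear to have their roles swapped (the text under \cref{lem:opt1} is the column-direction argument you give, while \cref{lem:opt2} just says to reuse it with \cref{lem:inc2}); your pairing of the column statement with \cref{lem:inc1} is the correct one.
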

\begin{proof}
Apply the proof of \cref{lem:opt1}, this time using \cref{lem:inc2} instead.
\end{proof}

To give the complete characterization of points with $P_C(\hat{i}, \hat{j}) = 1$ by only using the table of $opt$, we derive several lemmas.

\begin{lemma}
\label{lem:char1}
    For all $0 \le i, j \le n - 1$, if $opt(i, j) \neq opt(i, j + 1)$ and $P_{C, q}(\hat{i}, \hat{j}) = 0$ for all $q$, it holds that $P_C(\hat{i}, \hat{j}) = 0$.
\end{lemma}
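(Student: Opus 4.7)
I would start from the mixed second-difference identity
\[
P_C(\hat{i},\hat{j}) = P_C^\Sigma(i, j+1) - P_C^\Sigma(i, j) - P_C^\Sigma(i+1, j+1) + P_C^\Sigma(i+1, j),
\]
and substitute each corner via \cref{lem:pointwisemin}. Setting $q_1 = opt(i, j)$, $q_2 = opt(i, j+1)$, $q_3 = opt(i+1, j)$, $q_4 = opt(i+1, j+1)$, the hypothesis together with \cref{lem:opt2} gives $q_1 < q_2$, while \cref{lem:opt1} gives $q_1 \le q_3$ and $q_2 \le q_4$.

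The first step is to collapse the four corner values of $F$ down to $F_{q_2}$ and $F_{q_4}$ only. The tool is a \emph{pinching} on $\delta_{q_1, q_2}$: minimality of $opt$ at $(i, j)$ gives $\delta_{q_1, q_2}(i, j) \le 0$, while minimality at $(i, j+1)$ gives $\delta_{q_1, q_2}(i, j+1) > 0$; since \cref{lem:inc1} bounds the column-wise increment of $\delta_{q_1, q_2}$ by one, we must have $\delta_{q_1, q_2}(i, j) = 0$ and $\delta_{q_1, q_2}(i, j+1) = 1$. Hence $P_C^\Sigma(i, j) = F_{q_1}(i, j) = F_{q_2}(i, j)$. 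The same pinching applied to $(q_3, q_4)$ across the pair $(i+1, j), (i+1, j+1)$ (trivial when $q_3 = q_4$) gives $P_C^\Sigma(i+1, j) = F_{q_4}(i+1, j)$.

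Next, I would use the elementary identity
\[
F_q(i, j+1) - F_q(i, j) - F_q(i+1, j+1) + F_q(i+1, j) = P_{C, q}(\hat{i}, \hat{j}),
\]
valid for any fixed $q$ because the two side sums in the definition of $F_q$ each depend on only one of $i, j$ and so cancel in the mixed second difference, leaving only the mixed second difference of $P_{C, q}^\Sigma$, which equals $P_{C, q}(\hat{i}, \hat{j})$ by definition. Applying this with $q = q_2$ turns the four-corner expression into
\[
P_C(\hat{i}, \hat{j}) = P_{C, q_2}(\hat{i}, \hat{j}) + \bigl(\delta_{q_2, q_4}(i+1, j+1) - \delta_{q_2, q_4}(i+1, j)\bigr),
\]
and the first summand vanishes by hypothesis.

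The main obstacle is then to show the bracket vanishes when $q_2 < q_4$ (when $q_2 = q_4$ it is immediate). For this I would reopen the case analysis inside the proof of \cref{lem:inc1}: the already-established jump $\delta_{q_1, q_2}(i, j+1) - \delta_{q_1, q_2}(i, j) = 1$ forces $x_0 = M_B(\hat{j})_{sub} \in [q_1, q_2]$, and in the boundary sub-case $x_0 = q_2$ it further pins the unique nonzero in column $\hat{j}$ of $P_{C, q_2}$ to a row with index $\le i$; together with the hypothesis $P_{C, q_2}(\hat{i}, \hat{j}) = 0$, this row index is in fact strictly less than $i$. Re-applying the case analysis of \cref{lem:inc1} to $\delta_{q_2, q_4}(i+1, \cdot)$ then eliminates every remaining sub-case: $x_0 < q_2$ forces the increment to zero immediately (neither $P_{C, q_2}$ nor $P_{C, q_4}$ has a one in column $\hat{j}$), and $x_0 = q_2$ with the pinned row $< i < i+1$ likewise yields zero. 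Hence $P_C(\hat{i}, \hat{j}) = 0$.
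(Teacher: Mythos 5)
Your proof is correct, but its second half takes a genuinely different route from the paper. The opening moves coincide: the paper also pins $\delta_{a,b}(i,j)=0,\ \delta_{a,b}(i,j+1)=1$ and the analogous pinchings via \cref{lem:inc1}, \ref{lem:inc2} and minimality of the $opt$ index. From there, however, the paper stays entirely at the level of those lemmas: it shows that \emph{all four} corner values of $P_C^\Sigma$ coincide with the single function $F_d$ for $d=opt(i+1,j+1)$ (the key extra step being the transitivity $F_a(i,j)=F_b(i,j)$ and $F_b(i,j+1)=F_d(i,j+1)$, which force $\delta_{a,d}(i,j)=0$), so the mixed second difference is exactly $P_{C,d}(\hat{i},\hat{j})=0$ and the permutation structure never has to be inspected again. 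You instead collapse only the top corners to $F_{q_2}$ and the bottom corners to $F_{q_4}$, which leaves the residual term $\delta_{q_2,q_4}(i+1,j+1)-\delta_{q_2,q_4}(i+1,j)$; this one-unit ambiguity is not forced to vanish by the monotonicity lemmas and minimality alone, and you settle it by reopening the case analysis inside the proof of \cref{lem:inc1}: the unit jump of $\delta_{q_1,q_2}(i,\cdot)$ at column $\hat{j}$ confines the color $x_0=M_B(\hat{j})_{sub}$ of the unique point in that column to $[q_1,q_2]$ and, when $x_0=q_2$, places its row index below $i$, so it cannot produce a jump for $\delta_{q_2,q_4}$ at row $i+1$. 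Both arguments are sound (your mixed-second-difference identity for $F_q$ and the bound $q_1<q_2\le q_4$, $q_3\le q_4$ are all justified by \cref{lem:pointwisemin}, \ref{lem:opt1}, \ref{lem:opt2}). The trade-off: the paper's version is shorter and more modular, using \cref{lem:inc1}, \ref{lem:inc2} purely as black boxes and yielding the same "single governing subproblem" picture as \cref{lem:char4}; yours gives an explicit quantitative decomposition $P_C(\hat{i},\hat{j})=P_{C,q_2}(\hat{i},\hat{j})+\bigl(\delta_{q_2,q_4}(i+1,j+1)-\delta_{q_2,q_4}(i+1,j)\bigr)$ and makes the combinatorial reason visible, at the cost of re-deriving part of the internal computation of \cref{lem:inc1}.
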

\begin{proof}
Let $a= opt(i, j), b = opt(i, j+1), c = opt(i+1, j), d = opt(i+1, j+1)$. By \cref{lem:opt1} and \cref{lem:opt2}, we have $a < b, a \le c, c \le d, b \le d$.

We first prove $F_b(i, j+1) = F_d(i, j+1)$. If $b = d$, this is evident. Suppose that $b \neq d$. Since $b < d$, we have that $\delta_{b, d}(i, j+1) \le 0$ and $\delta_{b, d}(i+1, j+1) > 0$. By \cref{lem:inc2}, the only possible choice of value is $\delta_{b, d}(i, j+1) = 0, \delta_{b, d}(i+1, j+1) = 1$. This means that $F_b(i, j+1) = F_d(i, j+1)$. With the same procedure, this time using  \cref{lem:inc1}, we can also prove $F_c(i+1, j) = F_d(i+1, j)$ as well.

Now we prove $F_a(i, j) = F_d(i, j)$. Since $a < b$, we have that $\delta_{a, b}(i, j) \le 0$ and $\delta_{a, b}(i, j+1) > 0$. By \cref{lem:inc1}, the only possible choice of value is $\delta_{a, b}(i, j) = 0, \delta_{a, b}(i, j+1) = 1$. Since $F_b(i, j+1) = F_d(i, j+1)$, we have $\delta_{a, d}(i, j+1) = 1$. By \cref{lem:inc1}, $\delta_{a, d}(i, j) \geq 0$, and we can't have $\delta_{a, d}(i, j) > 0$ as that will make $a$ not a minimum. Hence we have $\delta_{a, b}(i, j) = 0$ and $F_a(i, j) = F_d(i, j)$. Summing this up, we have:

\begin{align*}
P_C^\Sigma(i, j) = F_d(i, j)\\
P_C^\Sigma(i, j+1) = F_d(i, j+1)\\
P_C^\Sigma(i+1, j) = F_d(i+1, j)\\
P_C^\Sigma(i+1, j+1) = F_d(i+1, j+1)
\end{align*}

By \cref{lem:pointwisemin}, we have $P_C(\hat{i}, \hat{j}) = P_{C, d}(\hat{i}, \hat{j})$, and since the RHS is $0$, the LHS is $0$ as well.
\end{proof}

\begin{lemma}
\label{lem:char2}
    For all $0 \le i, j \le n - 1$, if $opt(i, j) \neq opt(i+1, j)$ and $P_{C, q}(\hat{i}, \hat{j}) = 0$ for all $q$, it holds that $P_C(\hat{i}, \hat{j}) = 0$.
\end{lemma}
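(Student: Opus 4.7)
The plan is to mirror the proof of \cref{lem:char1}, swapping the roles of the row direction and the column direction so that \cref{lem:inc2} plays the role that \cref{lem:inc1} played before, and vice versa. As before, I would name the four corners $a = opt(i,j)$, $b = opt(i,j{+}1)$, $c = opt(i{+}1,j)$, $d = opt(i{+}1,j{+}1)$. By the monotonicity lemmas \cref{lem:opt1} and \cref{lem:opt2}, together with the hypothesis $opt(i,j) \neq opt(i{+}1,j)$, I get $a \le b$, $a < c$, $b \le d$, $c \le d$. The ultimate goal is to show that $P_C^\Sigma$ coincides with the single function $F_d$ at all four corners; then \cref{lem:pointwisemin} combined with the hypothesis $P_{C,q}(\hat i, \hat j) = 0$ for every $q$ (in particular for $q = d$) forces $P_C(\hat i, \hat j) = 0$.

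To reach that goal I would establish, in order, $F_c(i{+}1,j) = F_d(i{+}1,j)$, then $F_b(i,j{+}1) = F_d(i,j{+}1)$, and finally $F_a(i,j) = F_d(i,j)$. For the first identity, the case $c = d$ is immediate; when $c < d$, the minimality of $c$ at $(i{+}1,j)$ and of $d$ at $(i{+}1,j{+}1)$ give $\delta_{c,d}(i{+}1,j) \le 0$ and $\delta_{c,d}(i{+}1,j{+}1) > 0$, so \cref{lem:inc1} (the column-direction one-step bound) pins both values down and yields the equality. For the second identity I use the same trick in the vertical direction, applying \cref{lem:inc2} to the pair of points $(i,j{+}1)$ and $(i{+}1,j{+}1)$.

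For the final identity $F_a(i,j) = F_d(i,j)$, the input is the strict inequality $a < c$, which via $\delta_{a,c}(i,j) \le 0$ and $\delta_{a,c}(i{+}1,j) > 0$ together with \cref{lem:inc2} pins $\delta_{a,c}(i,j) = 0$ and $\delta_{a,c}(i{+}1,j) = 1$. Combining with the already-proved $F_c(i{+}1,j) = F_d(i{+}1,j)$ gives $\delta_{a,d}(i{+}1,j) = 1$, so \cref{lem:inc2} forces $\delta_{a,d}(i,j) \ge 0$; the minimality of $a$ at $(i,j)$ forbids strict positivity, so equality holds.

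The only step that requires a little care is the chain step that transfers the equality $F_b(i,j{+}1) = F_d(i,j{+}1)$ (or $F_c(i{+}1,j) = F_d(i{+}1,j)$) back to the corner $(i,j)$ to conclude $F_a = F_d$ there; this is exactly the place where one must argue via \cref{lem:inc2} (rather than \cref{lem:inc1}) because the witnessing inequality $a < c$ lives along the vertical edge, and is the only substantive change from the proof of \cref{lem:char1}. Once the four corner equalities are in hand, \cref{lem:pointwisemin} finishes the proof: $P_C^\Sigma$ equals $F_d$ at all four corners of the $2\times 2$ cell around $(\hat i,\hat j)$, so $P_C(\hat i,\hat j) = P_{C,d}(\hat i,\hat j) = 0$.
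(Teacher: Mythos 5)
Your proof is correct and is exactly the argument the paper intends: the paper's proof of \cref{lem:char2} is a one-line pointer to repeating the argument of \cref{lem:char1} with rows and columns exchanged (its citation of \cref{lem:pointwisemin} there is evidently a typo), and your mirrored version—driving the strict inequality $a < c$ along the vertical edge and using \cref{lem:inc2} in place of \cref{lem:inc1} to pin $\delta_{a,d}(i,j)=0$, hence $P_C^\Sigma = F_d$ at all four corners and $P_C(\hat{i},\hat{j}) = P_{C,d}(\hat{i},\hat{j}) = 0$—is precisely that symmetric argument. No gaps.
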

\begin{proof}
Proceed as in \cref{lem:pointwisemin}.
\end{proof}

\begin{lemma}
\label{lem:char3}
    For all $0 \le i, j \le n - 1$, if $opt(i, j) = opt(i+1, j), opt(i, j) = opt(i, j+1)$ and $opt(i, j) \neq opt(i+1, j+1)$, it holds that $P_C(\hat{i}, \hat{j}) = 1$.
\end{lemma}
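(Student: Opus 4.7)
The plan is to reduce the claim to a clean combinatorial identity. Set $q := opt(i, j) = opt(i+1, j) = opt(i, j+1)$ and $r := opt(i+1, j+1)$. By the monotonicity of $opt$ (\cref{lem:opt1}, \cref{lem:opt2}) and the hypothesis $q \neq r$, we must have $q < r$. The main technical step is to pin down $\delta_{q,r}(i+1, j+1)$ exactly. Because $r$ is the smallest minimizer of $F_{\cdot}(i+1, j+1)$ and $q < r$, the value $F_q(i+1, j+1)$ strictly exceeds $F_r(i+1, j+1)$; since everything is integer-valued, $\delta_{q, r}(i+1, j+1) \geq 1$. Conversely, $opt(i, j+1) = q$ forces $\delta_{q, r}(i, j+1) \leq 0$. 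Now \cref{lem:inc2} applied at column $j+1$ says the vertical increment of $\delta_{q, r}$ lies in $\{0, 1\}$, which squeezes both inequalities into equalities: $\delta_{q,r}(i, j+1) = 0$ and $\delta_{q,r}(i+1, j+1) = 1$. This sandwiching is precisely the trick used inside the proof of \cref{lem:char1}.

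With this in hand, I would compute $P_C(\hat{i}, \hat{j})$ via the standard two-dimensional finite difference $P_C(\hat{i}, \hat{j}) = P_C^\Sigma(i, j+1) - P_C^\Sigma(i, j) - P_C^\Sigma(i+1, j+1) + P_C^\Sigma(i+1, j)$, substituting $P_C^\Sigma(a, b) = F_{opt(a, b)}(a, b)$. The three corners with $opt = q$ each contribute an $F_q$ term, while the remaining corner satisfies $F_r(i+1, j+1) = F_q(i+1, j+1) - 1$ by the previous step. So the expression collapses to $[F_q(i, j+1) - F_q(i, j) - F_q(i+1, j+1) + F_q(i+1, j)] + 1$. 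The bracketed term is the mixed difference of $F_q$; by the definition of $F_q$, the summand $\sum_{x < q} P_{C,x}^\Sigma(a, n)$ depends only on $a$ and the summand $\sum_{x > q} P_{C,x}^\Sigma(0, b)$ depends only on $b$, so both cancel in the mixed difference, leaving $P_{C, q}(\hat{i}, \hat{j})$. Hence $P_C(\hat{i}, \hat{j}) = P_{C, q}(\hat{i}, \hat{j}) + 1$.

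Finally, since $P_C$ and $P_{C, q}$ are both (sub-)permutation matrices, each of their entries lies in $\{0, 1\}$. The only way $P_{C, q}(\hat{i}, \hat{j}) + 1$ can land in $\{0, 1\}$ is $P_{C, q}(\hat{i}, \hat{j}) = 0$ and $P_C(\hat{i}, \hat{j}) = 1$, which is exactly the claim. The main obstacle is the first step — pinning $\delta_{q,r}(i+1, j+1)$ to exactly $1$ rather than merely positive; once that is established, the rest is a routine algebraic collapse exploiting the telescoping structure of $F_q$.
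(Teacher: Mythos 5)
Your proof is correct and follows essentially the same route as the paper: pin down $\delta_{q,r}(i+1,j+1)=1$ by a squeeze argument using the unit-increment lemmas, then read off $P_C(\hat{i},\hat{j})=P_{C,q}(\hat{i},\hat{j})+1$ from the four corner values and conclude via the $\{0,1\}$-valuedness of both matrices. The only (immaterial) difference is that you squeeze along the vertical edge $(i,j+1)\to(i+1,j+1)$ via \cref{lem:inc2}, whereas the paper squeezes along the horizontal edge $(i+1,j)\to(i+1,j+1)$ via \cref{lem:inc1}; your explicit finite-difference collapse of $F_q$ is exactly what the paper leaves implicit in its appeal to \cref{lem:pointwisemin}.
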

\begin{proof}
Let $a = opt(i, j) = opt(i+1, j) = opt(i, j+1)$ and $b = opt(i+1, j+1)$. By \cref{lem:inc1}, we have $a < b$. By the statement we have that $\delta_{a, b}(i+1, j) \le 0$ and $\delta_{a, b}(i+1, j+1) > 0$. By \cref{lem:inc1}, the only possible choice of value is $\delta_{a, b}(i+1, j) = 0, \delta_{a, b}(i+1, j+1) = 1$. Hence, we have:

\begin{align*}
P_C^\Sigma(i, j) = F_a(i, j)\\
P_C^\Sigma(i, j+1) = F_a(i, j+1)\\
P_C^\Sigma(i+1, j) = F_a(i+1, j) \\
P_C^\Sigma(i+1, j+1) = F_a(i+1, j+1) + 1
\end{align*}

By \cref{lem:pointwisemin}, we have $P_C(\hat{i}, \hat{j}) = P_{C, a}(\hat{i}, \hat{j}) + 1$. Since $P_{C, a}(\hat{i}, \hat{j})$ is either $0$ or $1$, this leaves the only option of $P_C(\hat{i}, \hat{j}) = 1$.

\end{proof}

\begin{lemma}
\label{lem:char4}
    For all $0 \le i, j \le n - 1$, if $opt(i, j) = opt(i+1, j)$, $opt(i, j) = opt(i, j+1)$, $opt(i, j) = opt(i+1, j+1)$, then $P_C(\hat{i}, \hat{j}) = 1$ holds if and only if $P_{C, opt(i, j)}(\hat{i}, \hat{j}) = 1$.
\end{lemma}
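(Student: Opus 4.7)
The plan is to set $a = opt(i,j)$, note that by hypothesis the same value $a$ attains the minimum of $F_q$ at all four corners $(i,j)$, $(i, j+1)$, $(i+1, j)$, $(i+1, j+1)$, and then recover $P_C(\hat i, \hat j)$ from $P_C^\Sigma$ by a $2\times 2$ alternating sum.

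More concretely, by \cref{lem:pointwisemin} together with the definition of $opt$, the assumption $opt(i,j) = opt(i+1,j) = opt(i, j+1) = opt(i+1, j+1) = a$ gives
\begin{align*}
P_C^\Sigma(i, j) &= F_a(i, j), & P_C^\Sigma(i, j+1) &= F_a(i, j+1), \\
P_C^\Sigma(i+1, j) &= F_a(i+1, j), & P_C^\Sigma(i+1, j+1) &= F_a(i+1, j+1).
\end{align*}
From the definition of the distribution matrix, the entry $P_C(\hat i, \hat j)$ can be recovered as the alternating sum $P_C^\Sigma(i, j+1) - P_C^\Sigma(i+1, j+1) - P_C^\Sigma(i, j) + P_C^\Sigma(i+1, j)$. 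Substituting the four equalities above, it suffices to compute the corresponding alternating sum of $F_a$.

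Now I would exploit the key structural feature of $F_a$: inspecting the formula
\begin{align*}
F_a(i, j) = \sum_{1 \le x \le a-1} P_{C, x}^\Sigma(i, n) + P^\Sigma_{C, a}(i, j) + \sum_{a+1 \le x \le H} P_{C, x}^\Sigma(0, j),
\end{align*}
the first sum depends only on $i$ and the last only on $j$, so both cancel in the alternating sum over the $2\times 2$ block. Only the middle term $P^\Sigma_{C, a}(i, j)$ survives, and its alternating sum is exactly $P_{C, a}(\hat i, \hat j)$ by the same distribution-matrix identity. Therefore $P_C(\hat i, \hat j) = P_{C, a}(\hat i, \hat j)$, and since both sides take values in $\{0, 1\}$ (as entries of sub-permutation matrices), the biconditional claim follows immediately.

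There is no real obstacle here; this lemma is the "easy'' case complementing \cref{lem:char1}, \cref{lem:char2}, and \cref{lem:char3}. The only subtlety is getting the alternating sum bookkeeping right with the half-integer indexing convention of $M^\Sigma$; once one observes that the $i$-only and $j$-only pieces of $F_a$ telescope to zero, the conclusion is immediate and the proof is essentially a one-line computation.
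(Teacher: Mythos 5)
Your proof is correct and follows essentially the same route as the paper: both fix $a = opt(i,j)$, use the hypothesis to write $P_C^\Sigma = F_a$ at the four corners, and conclude $P_C(\hat i, \hat j) = P_{C,a}(\hat i, \hat j)$ via \cref{lem:pointwisemin}. You merely spell out the $2\times 2$ alternating-sum cancellation of the $i$-only and $j$-only terms of $F_a$, which the paper leaves implicit.
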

\begin{proof}
Let $a = opt(i, j)$, we have:
\begin{align*}
P_C^\Sigma(i, j) = F_a(i, j)\\
P_C^\Sigma(i, j+1) = F_a(i, j+1)\\
P_C^\Sigma(i+1, j) = F_a(i+1, j) \\
P_C^\Sigma(i+1, j+1) = F_a(i+1, j+1)
\end{align*}

Hence by \cref{lem:pointwisemin}, we have $P_C(\hat{i}, \hat{j}) = P_{C, a}(\hat{i}, \hat{j})$.
\end{proof}

\cref{lem:char1}, \ref{lem:char2}, \ref{lem:char3}, \ref{lem:char4} gives a sufficient characterization for all points with $P_C(\hat{i}, \hat{j}) = 1$. Except the points added in the procedure of \cref{lem:char3}, for all points in $P_C(\hat{i}, \hat{j}) = 1$ there exists a point in some $q$ where $P_{C, q}(\hat{i}, \hat{j}) = 1$. Hence, the algorithm can start by simply taking all points characterized by \cref{lem:char3} and then fill the remaining points from the permutation matrix $\sum_{q = 1}^{H} P(C, q)$.

\subsection{Computing the $opt(i, j)$ for selected lines}
\label{sec42}
For the first $n^{\delta} + 1$ machines indexed as $j = 0, 1, \ldots, n^{\delta}$, we will devise an algorithm that computes $opt(*, jG)$ in each machines. As the tables can be large, they won't be stored explicitly but rather as a set of $H$ intervals where the $k$-th interval denotes the set of $i$ with $opt(i, jG) = k$. 

To this end, for each $1 \le q < r \le H$, we will try to find the first $i$ where $\delta_{q, r}(i, jG) > 0$. Let $cmp(jG, q, r)$ be such $i$ ($n+1$ if there is no such $i$). If we compute all such $O(H^2)$ values, we can determine the $opt(i, jG)$ for all possible $j$ alongside the possible location where this minimum might change. By directly trying all possibilities, we can compute the desired intervals, and since $H^2 \le G$, this does not violate the space condition.

Let $p$ be a permutation on $\langle n\rangle$ created by taking a union of all points in each $P_{C, i}$. In other words, for all $i, \hat{x}, \hat{y}$, $P_{C, i}(\hat{x}, \hat{y}) = 1$, then $p(\hat{x}) = \hat{y}$. Additionally, to record the origin of each point, we say $p(\hat{x})$ is of \textbf{color} $i$ if $P_{C, i}(\hat{x}, \hat{y}) = 1$.

We will construct a $H$-ary tree structure $T$, where each node at the $i$-th level represents $H^i$ consecutive element in the permutation $p_i$. For example, each nodes at $0$-th level represents one element, such as $\{p(\hat{0})\}, \{p(\hat{1})\}$ and so on. In the $1$-th level, each nodes represent $\{p(\hat{0}), p(\hat{1}), \ldots, p(\hat{H}-1)\}$, $\{p(\hat{H}), p(\hat{H}+1), \ldots, p(\hat{2H}-1)\}$ and so on. This tree has a height of $\frac{1}{1 - \frac{\delta}{10}} = O(1)$. We denote $h(T)$ to be the height of the tree, hence $h(T) = \frac{1}{1 - \frac{\delta}{10}}$. If two nodes in an adjacent level share an element, the node in the lower level is the \textit{child} of the higher level. We say some child is left or right to the other child if it points to an interval with lower or higher indices. 

For each node $v$ of $T$, we associate it to $H$ arrays where the $i$-th array $arr(v, i)$ contains all elements of node $v$ with color $i$, sorted in the increasing order of $p_i$. $T$ can be constructed in $O(1)$ round. First, we create the leaf nodes. Then, the construction proceeds in phases. We do the following in the $i$-th phase ($1 \le i \le h(T)$):

\begin{itemize}
    \item Each node at $i-1$-th level clones itself. 
    \item Each node $v$ at $i$-th level collects all cloned information from the $i-1$-th level. It identifies each $arr(w, i)$ for all children $w$ of $v$ and sorts them all together to create $arr(v, i)$, which can be done in $O(1)$ round by \cref{lem:tools_sort}.
\end{itemize}

For the space usage, we need to analyze two parts: The size of the node itself and the total length of arrays associated with each node. For the second part, note that the sum of size of $arr(v, i)$ is $n$ for all $0 \le i \le h(T)$. Hence, the size of the arrays fits the near-linear space usage. For the first part, the total space usage is super-linear as each node contains $H$ pointers toward the arrays, and we have $O(n)$ nodes. To this end, we implicitly represent the lower $10$ levels without constructing a tree since each $H^{10} = G$ element consisting of the level-$10$ nodes of the tree can fit in a single machine. We do not create an actual tree for these single machines, and we process all the potential queries solely based on the information over the $H^{10}$ elements it contains.

After constructing the tree $T$, we will search the value $cmp(jG, q, r)$ by descending the tree $T$ in parallel. Specifically, the algorithm operates in $h(T)$ phases. Let $V(jG, q, r)$ be the current node of $T$, where we know that $cmp(jG, q, r)$ is in the interval that node represents. At each phase, $V(jG, q, r)$ descends to one of the children while satisfying one invariant: Let $\langle l:r \rangle$ be the range of index the node is representing, then $V(jG, q, r)$ is the rightmost node with $\delta_{q, r}(l, jG) \leq 0$ among all nodes of same level. Note that this invariant holds at the beginning of the phases because $\delta_{q, r}(0, j) \leq 0$ for all choices of $q < r$, $j$. Additionally, we also store the value $\delta_{q, r}(l, jG)$ for each search of $(jG, q, r)$.

Recall the following formula for $\delta_{q, r}(i, j)$:

\begin{align*}
    \delta_{q, r}(i, j) = P^\Sigma_{C, q}(i, j) + \sum_{q+1 \le x \le r} P_{C, x}^\Sigma(0, j) \\
    - \sum_{q \le x \le r-1} P_{C, x}^\Sigma(i, n) - P^\Sigma_{C, r}(i, j)
\end{align*}

To simulate one phase, we do the following. For each search $(jG, q, r)$, let's say its child represents the index $\langle p_0:p_1 \rangle, \langle p_1:p_2 \rangle, \ldots, \langle p_{H-1}:p_H \rangle$. To find the rightmost child while adhering to invariant, it suffices to compute $\delta_{q, r}(p_{i+1}, jG) - \delta_{q, r}(p_i, jG)$ for all $0 \le i \le H - 1$. Hence, for each query of $cmp(jG, q, r)$, we try to retrieve the following package of information:

\begin{itemize}
    \item For all childs $w$ of $V(jG, q, r)$ and each color $d \in [H]$, the number of elements in $arr(w, d)$. This corresponds to the term $P^\Sigma_{C, d}(p_{i+1}, n) - P^\Sigma_{C, d}(p_i, n)$.
    \item For all childs $w$ of $V(jG, q, r)$ and each color $d \in [H]$, the number of elements in $arr(w, d)$ with value at most $jG$. This corresponds to the term $P^\Sigma_{C, d}(p_{i+1}, jG) - P^\Sigma_{C, d}(p_i, jG)$.
\end{itemize}

Each package is of size $O(H^2)$, and we need $O(H^2)$ search to retrieve each package. As there are $H^2$ candidates of $q, r$, we need to process $\frac{n H^4}{G}$ total rank search queries for retrieving this information. As $\frac{n H^4}{G}$ is sublinear, all queries fit the total space, and we can apply \cref{lem:tools_search} to compute all packages in $O(1)$ rounds. From the packages, we can compute $\delta_{q, r}(p_i, j)$ for all $0 \le i \le H$. If $\delta_{q, r}(p_H, j) \le 0$, we say $cmp(jG, q, r) = p_H + 1$ and the search is done. Otherwise, we take the maximum $i$ with $\delta_{q, r}(p_i, j) \le 0$ and descend the $(i+1)$-th child from the left. Since all $\delta_{q, r}(p_i, j)$ are known, the value $\delta_{q, r}(l, j)$ can be appropriately updated. 

After $h(T)$ phases of such iteration, all searches are either finished or $V(jG, q, r)$ points to a leaf node. If a leaf node represents $p(\hat{i})$, then we can state that $cmp(jG, q, r) = i + 1$. 

Finally, we need a small technical detail to proceed to the later stage of the algorithm. For all $i$ where $opt(i, jG) \neq opt(i+1, jG)$, we compute, for each $opt(i, jG) \le k \le opt(i+1, jG)$, the value $\delta_{k, k+1}(i, jG)$. To compute this, note that the interval $\langle i, n\rangle$ can be decomposed into at most $B \times h(T)$ disjoint intervals where each interval corresponds to a node in a tree. We can find these intervals (set of nodes) by descending the tree $T$ in $h(T) = O(1)$ rounds. Then, we can use \cref{lem:tools_search} to retrieve the values from these nodes in $O(1)$ rounds. The reason for computing this information will be soon evident in \cref{sec43}. 

This concludes the descriptions of the computation of $opt(*, jG)$ for all $j$ in $O(1)$ round. 

We take a transpose of $P_C$ by taking an inverse permutation as in \cref{lem:tools_inv} and also compute all $opt(iG, *)$ for all $i$ in $O(1)$ rounds.

\subsection{Computing the remaining $opt(i, j)$ locally}
\label{sec43}

Before proceeding to this step, we present some definitions. For each $1 \le q \le H - 1$, we define a \textbf{demarcation line} $q$ as the line demarcating the area with $opt(i, j) \le q$ and $opt(i, j) > q$. Due to \cref{lem:inc1}, \ref{lem:inc2}, each demarcation line is a monotone chain connecting the lower-left and upper-right areas. We visualize the demarcation line as a polyline connecting the half-integer points $(\hat{i}, \hat{j})$. Specifically, a point $(\hat{i}, \hat{j})$ is \textbf{pierced} by demarcation line $q$ if the demarcation line passes through $(\hat{i}, \hat{j})$. This happens, if not all of $[opt(i, j), opt(i+1, j), opt(i, j+1), opt(i+1, j+1)]$ are at most $q$, or at least $q+1$. In this sense, a demarcation line $q$ connects each adjacent pierced point. A \textbf{subgrid} $(i, j)$ contains all cells in $\langle iG, (i+1)G \rangle \times \langle jG, (j+1)G \rangle$. A point $(\hat{i}, \hat{j})$ is \textbf{interesting} if $opt(i, j) = opt(i + 1, j)$, $opt(i, j) = opt(i, j+1)$, $opt(i, j) \neq opt(i+1, j+1)$. Recall that from \cref{lem:char1}, \ref{lem:char2}, \ref{lem:char3}, \ref{lem:char4}, it suffices to find and mark all interesting points.

Our first observation is that there are at most $O(\frac{nH}{G})$ subgrids of interest:

\begin{lemma}
\label{lem:subgrid_trivial}
There exists at most $O(\frac{nH}{G})$ subgrid which contains at least one interesting point.
\end{lemma}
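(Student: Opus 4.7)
The plan is to charge every interesting point to a demarcation line passing through it, and then bound how many subgrids each demarcation line can touch.

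First I would observe that every interesting point is pierced by at least one demarcation line. By definition, if $(\hat{i},\hat{j})$ is interesting then $a := opt(i,j) = opt(i+1,j) = opt(i,j+1)$ while $opt(i+1,j+1) = b > a$, where $b > a$ is guaranteed by monotonicity (\cref{lem:opt1}, \cref{lem:opt2}). Then for every $q$ with $a \le q < b$, three of the four corners lie in $\{opt \le q\}$ but the fourth lies in $\{opt \ge q+1\}$, so $(\hat{i},\hat{j})$ is pierced by demarcation line $q$. In particular, at least one of the $H-1$ demarcation lines passes through every interesting point.

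Next, I would bound the number of subgrids each demarcation line enters. By \cref{lem:opt1} and \cref{lem:opt2}, the set $\{opt \le q\}$ is down-closed in both coordinates, so demarcation line $q$ is a monotone staircase traveling from the lower-right to the upper-left of the $n \times n$ grid of pierced points. Viewed on the $(n/G) \times (n/G)$ array of subgrids, such a monotone chain can enter at most $2(n/G)$ subgrids: each time it leaves a subgrid it either steps into the horizontally adjacent subgrid or the vertically adjacent one, so the total number of subgrid transitions is bounded by the number of subgrid rows plus subgrid columns it traverses, i.e.\ $O(n/G)$.

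Finally, summing over the $H-1$ demarcation lines, the total number of subgrids containing at least one pierced point is $(H-1) \cdot O(n/G) = O(nH/G)$. Since every interesting point is pierced by some demarcation line, this also bounds the number of subgrids containing at least one interesting point, which is exactly the claim. No step here is a genuine obstacle; the only subtle point is verifying that the interesting condition together with monotonicity forces a strict jump in $opt$ across the diagonal, which is precisely what ties interesting points to demarcation lines.
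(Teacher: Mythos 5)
Your proposal is correct and follows essentially the same route as the paper: charge interesting points to the demarcation lines piercing them, and bound each monotone demarcation line's intersection with the subgrid array by $O(n/G)$, giving $O(nH/G)$ over all $H-1$ lines. The paper phrases the first step contrapositively (a subgrid touched by no demarcation line has constant $opt$ and hence no interesting points), while you argue directly that an interesting point is pierced by some line; the two are equivalent.
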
\begin{proof}
    Each demarcation line intersects at most $\frac{2n}{G}$ subgrids, as each movement either increases the column number or decreases the row number. Hence, at most $\frac{2nH}{G}$ subgrids are intersected by at least one demarcation line. If any demarcation line does not intersect a subgrid, every point in the subgrid shares the same $opt(i, j)$ value. Hence, there are no interesting points.
\end{proof}

Now, we demonstrate the algorithm for the subgrid $(i, j)$, which contains at least one interesting point. We will compute all $opt(r, c)$ in increasing order of rows and decreasing order of columns. As an invariant, we maintain the following information before computing the $opt(r, c)$ for some $iG < r \le (i+1)G, jG \le c < (j+1)G$. Let $chain(r, c)$ be the ordered sequence of cells $\{(r-1, jG), (r-1, jG+1), \ldots, (r-1, c), (r-1, c+1), (r, c+1), (r, c+2), \ldots, (r, (j+1)G), (r+1, (j+1)G), \ldots, ((i+1)G, (j+1)G)\}$. We have

\begin{itemize}
    \item The value $opt(r^\prime, c^\prime)$ for each $(r^\prime, c^\prime) \in chain(r, c)$. Note that all points in $chain(r, c)$ are in nondecreasing order of rows and columns, and the value $opt(r^\prime, c^\prime)$ are also nondecreasing.
    \item Suppose that for two adjacent cell $(r_1, c_1)$ and $(r_2, c_2)$ in $chain(r, c)$ has $opt(r_1, c_1) \neq opt(r_2, c_2)$. For all $opt(r_1, c_1) \le k < opt(r_2, c_2)$, we store $\delta_{k, k + 1}(r_1, c_1)$.
\end{itemize}

We will refer to this information as \textbf{invariant information}.

We will compute the value $opt(r, c)$. Since we have $opt(r-1, c) \le opt(r, c) \le opt(r, c+1)$, if $opt(r-1, c) = opt(r, c+1)$, we can say $opt(r, c) = opt(r-1, c)$, and it is straightforward to maintain the invariants as well. Suppose $opt(r-1, c) < opt(r, c+1)$. We need to know, for each $opt(r-1, c) \le k < opt(r, c+1)$, the value of $\delta_{k, k+1}(r, c)$. Here, since $\delta_{p, q}(r, c) = \sum_{i = p}^{q-1} \delta_{i, i+1}(r, c)$, this information is enough to determine the best subproblem. By the invariant, we either know $\delta_{k, k+1}(r-1, c)$ or $\delta_{k, k+1}(r-1, c+1)$, and we want to transform them into $\delta_{k, k+1}(r, c)$.

The proof of \cref{lem:inc1}, \ref{lem:inc2} reveals the necessary information required to obtain $\delta_{p, q}(i, j+1) - \delta_{p, q}(i, j)$ and $\delta_{p, q}(i+1, j) - \delta_{p, q}(i, j)$. From these, we can conclude that the additional information other than invariant required to acquire $\delta_{k, k+1}(r, c)$ is:

\begin{itemize}
    \item Whether the point in row $\hat{r}-1$ is of color range $[opt(r-1, c), opt(r, c+1)]$, and if it is, the color and column index of such point.
    \item Whether the point in column $\hat{c}$ is of color range $[opt(r-1, c), opt(r, c+1)]$, and if it is, the color and row index of such point.
\end{itemize}

We will refer to this information as \textbf{non-invariant information}.

Given the non-invariant information, we can determine the value $\delta_{k, k+1}(r, c)$ and $opt(r, c)$ in turn. Then we use the value $\delta_{k, k+1}(r, c)$ or $\delta_{k, k+1}(r-1 , c)$ to maintain the invariant, which is enough. Note that, when determining $opt(r, c)$, we also have the information $opt(r-1, c), opt(r-1, c+1), opt(r, c+1)$ as well, so we can also mark all interesting points as well.

This concludes the description of the algorithm for computing all interesting points. Now, let's check whether we can supply all information within our space restrictions and construct an algorithm that does it.

We will define \textbf{subgrid instance} as a set of invariant / non-variant information necessary to determine all interesting points. From \cref{lem:subgrid_trivial}, we have at most $O(\frac{nH}{G})$ subgrid instances to prepare. 

To prepare the invariant information, we can use the results of \cref{sec42}, which is of size $O(H)$ for each subgrid instance. 

One naive way to prepare the non-invariant information is to add all points in row $r \in \langle iG:(i+1)G\rangle$ and column $c \in \langle jG:(j+1)G\rangle$ to the subgrid instances, but this is of $O(G)$ size for each subgrid instances which does not sum to near-linear total size. However, we present a simple variation of this naive method that can yield a near-linear total space.

For each demarcating line $q$ and the subgrid it intersects, let $(\hat{r_1}, \hat{c_1})$ be the lower-leftmost pierced point and $(\hat{r_2}, \hat{c_2})$ be the upper-rightmost pierced point. Then, for all points in row $\langle r_1:(r_2+1) \rangle$ with color $q$ or $q+1$, we add such points to that subgrid instance. Similarly, for all points in column $\langle c_2:(c_1+1)\rangle$ with color $q$ or $q+1$, we add such points to that subgrid instance. The following lemma asserts that this procedure had added at most $O(n)$ points in total:

\begin{lemma}
    The above procedure adds one point into at most $4$ subgrid instances.
\end{lemma}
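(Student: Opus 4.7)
The plan is to fix a permutation point $p = (\hat{x}, \hat{y})$ of color $d$ and to bound the number of $(q, S)$ pairs (demarcation line $q$ together with subgrid $S$) whose corresponding subgrid instance includes $p$. First, since the procedure only adds $p$ when its color matches $q$ or $q+1$, only the two lines $q \in \{d-1, d\}$ can contribute, so at most two values of $q$ need to be considered.

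Second, for each such $q$ I aim to show that $p$ gets added to at most two subgrid instances: at most one through the row condition and at most one through the column condition. Summing, $p$ lands in at most $2 \cdot 2 = 4$ instances.

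To prove the per-$q$ per-direction bound, I rely on the monotone staircase structure of the chain of line $q$: each step of the chain moves one unit in a single coordinate, so the chain's pierced points in row $\hat{x}$ form a single contiguous horizontal segment (and similarly for pierced points in column $\hat{y}$). The row range $\langle r_1 : r_2+1\rangle$ of line $q$ inside a subgrid is exactly the range of rows where line $q$ has a pierced point inside that subgrid, so the subgrids $(i_p, j)$ whose line-$q$ row range contains $\hat{x}$ are precisely those whose column range meets this horizontal pierced segment. The key structural claim is that this horizontal segment (and symmetrically its vertical counterpart in column $\hat{y}$) lies within a single subgrid.

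The main obstacle is proving this structural claim. I plan to combine the monotonicity of $\delta_{q, q+1}$ from \cref{lem:inc1} and \cref{lem:inc2} with the fact that $p$ is a color-$d$ point at $(\hat{x}, \hat{y})$: this point causes a unit discrete cross-difference in $P_{C, d}^\Sigma$ at $(x, y)$ and anchors the $opt$-transition across level $q$ to column $\hat{y}$ within row $\hat{x}$. A short case analysis over $d = q$ versus $d = q+1$ should then confine the horizontal pierced segment to the subgrid that contains $\hat{y}$; the bound for the column condition follows by transposing the argument.
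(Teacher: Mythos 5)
Your outline (only lines $q\in\{d-1,d\}$ matter, then bound row-type and column-type additions separately) is the same as the paper's, but the step you yourself flag as the main obstacle — that the pierced points of line $q$ in row $\hat{x}$ lie in a single subgrid, ``anchored'' near column $\hat{y}$ — does not hold, and the sketched anchoring argument goes the wrong way. Look at the case analysis in the proof of \cref{lem:inc2}: if line $q$ runs horizontally between rows $x$ and $x+1$ over a range of columns $j$, then $\delta_{q,q+1}(x+1,j)-\delta_{q,q+1}(x,j)=1$ throughout that range, which forces the unique permutation point of row $\hat{x}$ to have color $q$ with its column to the right of \emph{every} column of the run, or color $q+1$ with its column to the left of \emph{every} column of the run. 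So the presence of your point $p$ at $(\hat{x},\hat{y})$ does not pull the run into the subgrid containing $\hat{y}$; on the contrary, $\hat{y}$ typically lies outside the run altogether, and nothing ties the run's horizontal extent to a single subgrid. Concretely, one can have $\delta_{q,q+1}(x,\cdot)=0$ and $\delta_{q,q+1}(x+1,\cdot)=1$ across a column range of length $\gg G$ (with the other lines elsewhere), so the run crosses many vertical grid lines, every traversed subgrid's row interval contains $\hat{x}$, and $p$ (which necessarily has color $q$ or $q+1$ in this situation) is added to each of those subgrid instances via the row condition. Monotonicity of the chain, which is all you (and the paper) invoke, only says the run is a contiguous horizontal segment; it does not bound how many subgrids that segment meets.

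It is worth noting that the paper's own one-sentence proof makes exactly the same leap (``only two cases exist (one row, one column)''), so your proposal faithfully reproduces the intended argument, including its unsupported step; but as written neither closes the gap, and the per-point constant cannot be recovered by confining the run to one subgrid. What does hold, and what the space analysis actually needs, is an amortized statement: for a fixed line $q$, the number of (point, subgrid) incidences created by the row condition is at most the number of color-$q$/$(q+1)$ points plus the total horizontal length of line $q$ divided by $G$, i.e.\ $O(n/H + n/G)$, and symmetrically for columns; summing over the $H-1$ lines gives $O(n)$ total additions. If you want a correct proof, you should either prove such an amortized bound (charging the extra copies of a point to the horizontal run length of the line at that row), or strengthen the statement of the procedure so that a point is only added to subgrids whose column range is also consistent with its position — the per-point bound of $4$ as literally claimed does not follow from monotonicity alone.
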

\begin{proof}
    Suppose we add a point with color $c$ into some subgrid instances. The demarcating line should either have an index $c-1$ or $c$. For a demarcating line to add a point, the row/column interval induced by two extreme pierced points should contain the point. Since each demarcating line is monotone, only two cases exist (one row, one column) in which the interval can contain a point. 
\end{proof}

We show that the subgrid instance populated by the above procedure has all non-invariant information. If $opt(r-1, c) < opt(r, c+1)$, then for all $opt(r-1, c) \le k < opt(r, c+1)$, the demarcating line $k$ pierces the point $(\hat{r} - 1, \hat{c})$. If a demarcating line $k$ pierces the point $(\hat{r} - 1, \hat{c})$, and the point in the row $\hat{r} - 1$ or column $\hat{c}$ has a color $k$ or $k+1$, then it is added to the subgrid instance. As a result, if the point in row $\hat{r} - 1$ is of color range $[opt(r-1, c), opt(r, c+1)]$, the demarcating line adds such point to the subgrid instance. The same goes for the column $\hat{c}$.

Now, we will present the actual algorithm. We first compute the size of each subgrid instance by tracing the demarcation lines in parallel. Given the size of each subgrid instance, we can sort them in the order of decreasing sizes and use greedy packing to assign the respective machines. We then retrace each demarcation line to populate the subgrid instance with non-invariant information. We also populate them with the invariant information as well. Finally, each machine solves the subgrid instance independently and reports all interesting points to complete the permutation $P_C$.

\section{Applications}\subsection{Extension to the sub-permutation case}
In this section, we prove \cref{thm:main_ext} from \cref{thm:main}. The following proof is inspired by Tiskin's proof  \cite{tiskin2013semilocal} on proving that all subunit-Monge matrices are closed under $\boxdot$ operation.

\mainExt*

\begin{proof}
For any row $i$ in $P_A$, which is zero, we can see that the row $i$ of $P_A \times P_B$ is also zero. Hence, we can delete this row and recover it from the resulting $P_C$ outputted by \cref{thm:main}. Likewise, for any column $j$ in $P_B$, which is zero, the column $j$ of $P_A \times P_B$ is also zero. Hence, using sorting in $O(1)$ rounds (\cref{lem:tools_sort}), one can assume that all rows of $P_A$ and all columns of $P_B$ are nonzero. Let $n_1, n_2, n_3$ be an integer such that $P_A$ is an $n_1 \times n_2$ matrix, and $P_B$ is an $n_2 \times n_3$ matrix. We can see that $n_1 \le n_2, n_2 \ge n_3$.

We will extend matrix $P_A$ and $P_B$ to a square $n_2 \times n_2$ matrix by appending $n_2 - n_1$ rows in the front of $P_A$, and $n_3 - n_1$ columns in the back of $P_B$. In other words, we have

\begin{align*}
P^\prime_A = \begin{bmatrix}
* \\ P_A
\end{bmatrix}, P^\prime_B = \begin{bmatrix}
P_B & *
\end{bmatrix},
\end{align*}

Then we can see, as long as $P_A^\prime$ and $P_B^\prime$ remains as a permutation matrix, the content of $*$ are irrelevent, and we can recover $P_C$ by the following:

\begin{align*}
\begin{bmatrix}
* & * \\
P_C & *
\end{bmatrix} = \begin{bmatrix}
* \\ P_A
\end{bmatrix} \times \begin{bmatrix}
P_B & *
\end{bmatrix}
\end{align*}

We have to add rows appropriately to make $P_A$ a permutation matrix. For this end, do the following: Initialize a length $n_2$ arrays $A$ filled with $1$ and mark all nonzero columns with $0$. Then, we shift the existing columns by $n_2 - n_1$, and for each element with $A[i] = 1$, we add $\sum_{j \le i} A[j]$-th row where the column $i$ is marked. As we can compute the prefix sum in $O(1)$ round \cite{10.1007/978-3-642-25591-5_39}, this procedure takes $O(1)$ round. The case of $P_B$ is just a transpose of the above procedure. 

As a result, we can convert the $P_A$ and $P_B$ into a permutation matrix $P^\prime_A, P^\prime_B$ in $O(1)$ rounds. By \cref{thm:main}, their multiplication takes $O(1)$ rounds. After the multiplication, we can simply ignore the upper $n_2 - n_1$ and right $n_3 - n_1$ columns to obtain $P_C$ in $O(1)$ rounds.
\end{proof}

\subsection{Proof of \cref{thm:main_lis}}
In this section, we prove \cref{thm:main_lis} using \cref{thm:main_ext}. Note that this is a rather typical step for the LIS algorithm that uses the unit-Monge matrix multiplication, and Section 4 of \cite{chs23} already well explains how this is done. In this paper, we delegate most of the explanation to \cite{chs23} and will only visit MPC-specific details.

\begin{proof}
    We proceed identically as the proof of Theorem 1.2 in \cite{chs23}, with only the following difference: After the splitting of $A$ into two halves $A = A_{lo} \circ A_{hi}$, we can relabel $A_{lo}$ and $A_{hi}$ in $O(1)$ rounds, since the sorting and inverse permutation can be computed in $O(1)$ rounds due to \cref{lem:tools_inv}, \ref{lem:tools_sort}.
\end{proof} 

\section{Discussion}
This paper presents an $O(1)$-round fully-scalable MPC algorithm for computing the subunit-Monge matrix multiplication. This result implies a $O(\log n)$-round fully-scalable MPC algorithm for computing the exact length of the longest increasing subsequences. Our results are deterministic, and improves all currently known algorithms of \cite{ims17, chs23}, only except the $O(1)$-round algorithm of \cite{ims17} which is approximate and is not fully-scalable.

One important question to address in future work is whether we can apply a similar approach to obtain an $O(1)$-round algorithm for the longest increasing subsequences. Since the crux of our work is to devise an algorithm for merging $poly(n)$ subproblems instead of two subproblems, it is natural to wonder if we can multiply $poly(n)$ matrices at once instead of two. Indeed, such an algorithm will likely yield an $O(1)$-round algorithm. However, our method of multiplying two matrices (at least so far as we understand the problem) requires divide-and-conquer with heavy technical details. We do not see or expect a solution that would make this work for a polynomial number of matrices. Rather, we believe there should be an $\Omega(\log n)$ conditional hardness result for the exact LIS problem, of which we currently do not have proof. Finding a very efficient $O(1)$-round algorithm or proving that such does not exist under popular conjectures will be an interesting challenge.
\bibliographystyle{alpha}
\bibliography{library}
% \printbibliography

\end{document}